\newtheorem{theorem}{Theorem}
\newtheorem{lemma}[theorem]{Lemma} 
\newcommand{\id}[1]{\mathsf{#1}}
\newcommand{\domsalt}{\id{AUTIG/salt}}
\newcommand{\domfrag}{\id{AUTIG/frag}}
\newcommand{\salt}{\id{salt}}
\newcommand{\leaderpk}{\id{leader\_pk}}
\newcommand{\hdr}{\id{hdr}}
\newcommand{\Enc}{\id{enc}}
\newcommand{\Commit}{\id{Commit}}
\newcommand{\round}{\mathrm{round}}
\newcommand{\collect}{\mathrm{collect}}
\newcommand{\update}{\mathrm{update}}
\newcommand{\extract}{\mathrm{extract}}
\newcommand{\broadcast}{\mathrm{broadcast}}
\newcommand{\final}{\mathrm{final}}
\newcommand{\detect}{\mathrm{detect}}
\newcommand{\handoff}{\mathrm{handoff}}
\newcommand{\recovery}{\mathrm{recovery}}
\newcommand{\edge}{\mathrm{edge}}
\newcommand{\defer}{\mathrm{defer}}
\newcommand{\bad}{\mathrm{bad}}
\newcommand{\prev}{\mathrm{prev}}   
\newcommand{\batch}{\mathrm{batch}} 
\newcommand{\state}[1]{\ensuremath{\mathsf{#1}}}
\newcommand{\Solid}{\state{Solid}}
\newcommand{\Shaded}{\state{Shaded}}
\newcommand{\Blank}{\state{Blank}}
\begin{document}

\date{}

\title{\Large \bf Proof-Carrying Fair Ordering: Asymmetric Verification for BFT via Incremental Graphs}

\author{
{\rm Pengkun Ren, Hai Dong, and Zahir Tari}\\
School of Computing Technologies, Centre of Cyber Security Research and Innovation,\\
RMIT University, Melbourne, Australia
\and
{\rm Nasrin Sohrabi}\\
Deakin University, Australia
\and
{\rm Pengcheng Zhang}\\
College of Computer Science and Software Engineering, Hohai University, China
}

\maketitle

\begin{abstract}
Byzantine Fault-Tolerant (BFT) consensus protocols ensure agreement on transaction ordering despite malicious actors, but unconstrained ordering power enables sophisticated value extraction attacks like front-running and sandwich attacks—a critical threat to blockchain systems. Order-fair consensus curbs adversarial value extraction by constraining how leaders may order transactions.  While state-of-the-art protocols such as Themis attain strong guarantees through graph-based ordering, they ask every replica to re-run the leader’s expensive ordering computation for validation—an inherently symmetric and redundant paradigm.
We present AUTIG, a high-performance, pluggable order-fairness service that breaks this symmetry. Our key insight is that verifying a fair order does not require re-computing it. Instead, verification can be reduced to a stateless audit of succinct, verifiable assertions about the ordering graph's properties. AUTIG realizes this via an asymmetric architecture: the leader maintains a persistent Unconfirmed-Transaction Incremental Graph (UTIG) to amortize graph construction across rounds and emits a structured proof of fairness with each proposal; followers validate the proof without maintaining historical state.
AUTIG introduces three critical innovations: (i) incremental graph maintenance driven by threshold-crossing events and state changes; (ii) a decoupled pipeline that overlaps leader-side collection/update/extraction with follower-side stateless verification; and (iii) a proof design covering all internal pairs in the finalized prefix plus a frontier completeness check to rule out hidden external dependencies. We implement AUTIG and evaluate it against symmetric graph-based baselines under partial synchrony. Experiments show higher throughput and lower end-to-end latency while preserving $\gamma$-batch-order-fairness.
\end{abstract}

\begin{table*}[t]
  \centering
  \caption{Architectural and complexity comparison of order-fair BFT protocols. $|B|$: batch size; $|\Delta B|$: incremental additions/updates; $|P|$: proof size.}
  \label{tab:comparison}
  \small
  \begin{tabular}{l|l|l|l|l}
    \hline
    \textbf{Property} & \textbf{Aequitas}~\cite{kelkar2020order} & \textbf{Themis}~\cite{kelkar2023themis} & \textbf{SpeedyFair}~\cite{mu2024separation} & \textbf{AUTIG (ours)} \\
    \hline\hline
    \multicolumn{5}{c}{\textit{Architectural Design}} \\
    \hline
    Liveness Guarantee & Weak & Standard (BFT) & Standard (BFT) & Standard (BFT) \\
    Ordering--Consensus Coupling & Tightly Coupled & Tightly Coupled & Decoupled (via OFO) & Decoupled (via Pipeline) \\
    Graph Data Model & Per-Round Transient & Per-Round Transient & Per-Round Transient & Persistent Cross-Round Graph \\
    Graph Update Model & Full Rebuild & Full Rebuild & Full Rebuild & Amortized Incremental Updates \\
    Verification Model & Symmetric & Symmetric & Symmetric & Asymmetric Proof-based Audit \\
    State Management & Stateless & Stateless & Mostly Stateless & Stateful at Leader (UTIG) \\
    \hline
    \multicolumn{5}{c}{\textit{Computational Complexity (per round)}} \\
    \hline
    Leader Ordering Work & $\mathcal{O}(|B|^2)$ & $\mathcal{O}(|B|^2)$ & $\mathcal{O}(|B|^2)$ & $\mathcal{O}(|\Delta B|^2)$ (amortized) \\
    Follower Verification Work & $\mathcal{O}(|B|^2)$ & $\mathcal{O}(|B|^2)$ & $\mathcal{O}(|B|^2)$ & $\mathcal{O}(|P|)$ (Proof Size) \\
    \hline
  \end{tabular}
\end{table*}
\section{Introduction}

Byzantine Fault-Tolerant consensus protocols form the bedrock of modern distributed systems, guaranteeing that a set of replicas can agree on a consistent, totally-ordered log of transactions despite the presence of malicious actors \cite{zhang2024reaching}. While these protocols robustly solve the problem of what to agree on, a critical, orthogonal dimension has recently garnered significant attention: the fairness of the agreed-upon order. In many high-stakes applications, particularly within the burgeoning ecosystem of Decentralized Finance (DeFi), the community \cite{daian2020flash, eskandari2019sok, qin2022quantifying, zhou2021high} has recognized that how the order is produced is as critical as the content itself. Unconstrained ordering power in the hands of a single protocol leader creates a fertile ground for sophisticated exploits, such as front-running \cite{baum2022sok} and sandwich attacks \cite{heimbach2022eliminating}. This value extraction, broadly termed Maximal Extractable Value (MEV) \cite{daian2020flash}, poses a significant threat to market integrity and user trust in both permissionless and permissioned \cite{castro1999practical, yin2019hotstuff} systems alike. To address this, the principle of order-fairness \cite{cachin2022quick, kelkar2022order, zhang2020byzantine, kursawe2020wendy, kursawe2020wendy} has emerged as a crucial security property for BFT consensus. The core idea is to democratize the ordering process: the final global transaction sequence should not be dictated by a single entity's whim but must instead reflect a supermajority consensus on the relative ordering of transactions as they are observed across the network.

\begin{figure}[t]
  \centering
  \begin{subfigure}{0.8\linewidth}
    \centering
    \includegraphics[width=\linewidth]{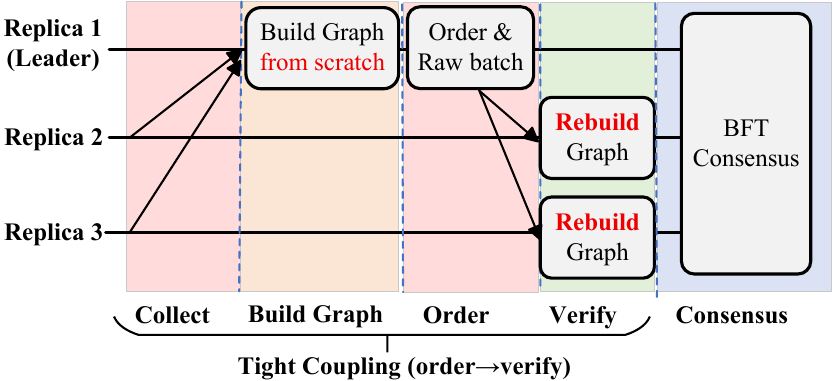}
    \caption{Themis rebuilds the dependency graph from scratch and followers re-execute symmetrically.}
    \label{fig:themis}
  \end{subfigure}

  \vspace{0.4em} 

  \begin{subfigure}{0.8\linewidth}
    \centering
    \includegraphics[width=\linewidth]{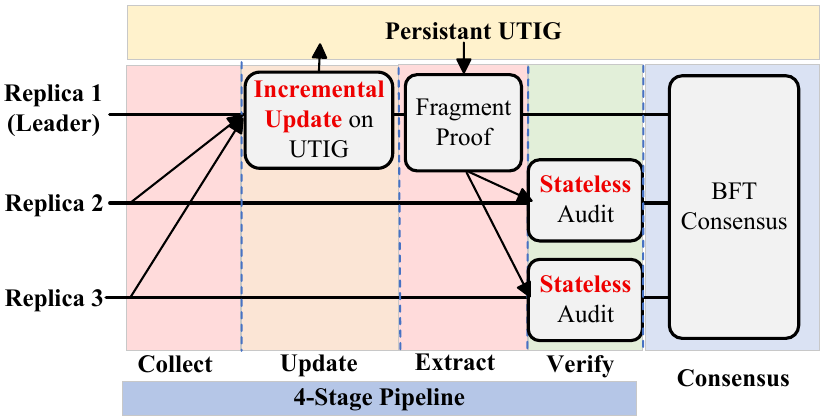}
    \caption{AUTIG: leader maintains a four-stage pipeline and a persistent UTIG with incremental updates and compact proofs enabling stateless verification.}
    \label{fig:autig}
  \end{subfigure}

  \caption{Execution-flow comparison of fair ordering engines.}
  \label{fig:swimlane-contrast}
\end{figure}

Significant strides have been made in this domain. A series of foundational, graph-based protocols~\cite{kelkar2020order, kelkar2023themis} have established a robust theoretical framework for achieving strong, relative order-fairness. These systems work by having a leader collect local orderings from replicas, construct a dependency graph representing collective preferences, and then algorithmically derive a verifiably fair proposal. However, a fundamental performance bottleneck emerges from the symmetric verification model inherent to this design line. To guard against a malicious leader, state-of-the-art protocols require every replica to independently re-execute the leader's entire, computationally intensive ordering process.  This requirement for system-wide, symmetric, and redundant re-computation fundamentally constrains throughput and finality. This core performance limitation stems from the symmetric verification model itself. While Themis~\cite{kelkar2023themis} exhibits this bottleneck within its tightly coupled, monolithic architecture, even decoupled approaches like SpeedyFair~\cite{mu2024separation} inherit it, as they merely relocate the expensive, from-scratch graph reconstruction to a parallel pipeline rather than fundamentally redesigning it.

In this paper, we explore a different architectural approach. Our central insight is that fairness verification can be decomposed from a holistic re-computation into an efficient audit of specific, locally-verifiable properties of the ordering graph. The integrity of a fair order rests on auditable predicates: (1) the aggregated weights of dependency edges correctly reflect the collected local orders, (2) the states of transactions are correctly derived, and (3) the final proposed sequence is a valid topological traversal of the resulting dependency graph. A leader can assert these properties and provide succinct evidence, transforming the followers' role from active re-constructors to lightweight auditors.

We introduce AUTIG, a novel order-fairness module that can be integrated with existing BFT systems and is built upon a novel asymmetric architecture.

\textbf{Stateful, Incremental Ordering with a Pipelined Engine.}  To eliminate the leader's primary overhead of per-round graph reconstruction, the AUTIG leader maintains a single, persistent data structure—the Unconfirmed Transaction Incremental Graph. This graph contains only transactions whose global order is still undecided, acting as a dynamic state machine for fairness computation. Instead of rebuilding the entire graph from all active transactions each round, the leader performs efficient, fine-grained incremental updates based on new local orders. This amortizes the cost of graph construction over many rounds. To further enhance performance, AUTIG employs a decoupled, four-stage pipeline for ordering: (1) Local Order Collection, (2) Incremental Graph Refresh, (3) Fair Prefix Extraction, and (4) Proposal Broadcasting. These stages operate concurrently, allowing the system to continuously process incoming data while the extraction logic periodically identifies and finalizes a prefix of the fair order.

\textbf{Lightweight, Proof-based Verification.} To eliminate the verification bottleneck for followers, AUTIG introduces a "propose-and-prove" model. The leader's proposal is accompanied by a compact, self-contained proof-of-fairness. This proof is not a generic cryptographic object but a structured set of evidence containing: (1) the minimal subgraph of newly finalized transactions and their internal dependencies, (2) attestations for the weights of these dependency edges, sufficient for followers to verify their directionality, and (3) assertions of the node states. This proof enables any replica to verify the proposal's fairness by only examining this small, self-contained data, without needing to maintain any historical graph state. Verification is thus reduced from \(O(|B|^{2})\) re-execution to \(O(|P|)\) proof auditing, where \(|P|\) is the size of the attached proof (internal-pair checks plus a frontier completeness set), i.e., near-linear in the evidence rather than quadratic in the batch. We also design a secure leader handoff protocol that preserves fairness guarantees without requiring the direct, costly transfer of the full UTIG.

This asymmetric design—\textbf{stateful leader, stateless followers}—fund-amentally diverges from prior symmetric models. Themis\cite{kelkar2023themis} requires all replicas to symmetrically rebuild the dependency graph and re-execute the ordering logic each round to verify a proposal. SpeedyFair\cite{mu2024separation}, while decoupling the process, retains this symmetric re-execution model within its parallel pipeline. By contrast, AUTIG's asymmetric state management allows it to uphold the strong fairness guarantees while achieving higher performance by eliminating the need for system-wide, per-round re-computation. The contrasting execution flows are illustrated in Fig.~\ref{fig:swimlane-contrast}. Throughout, we adopt the same \(\gamma\)-batch-order-fairness formulation as Themis~\cite{kelkar2023themis}, with $\tfrac{1}{2}<\gamma\le 1$.

Our contributions are as follows:
We design AUTIG, a pluggable order-fairness service whose leader maintains a persistent Unconfirmed Transaction Incremental Graph and updates it incrementally based on threshold-crossing events and state changes.

We develop a four-stage pipeline (collect \(\rightarrow\) update \(\rightarrow\) extract \(\rightarrow\) broadcast) that overlaps leader-side ordering with follower-side stateless verification, hiding ordering latency from the consensus critical path.

We introduce a propose-and-prove mechanism where compact proofs contain (i) all internal pair checks for the finalized prefix and (ii) a frontier completeness check preventing hidden external dependencies, reducing follower work from \(O(|B|^{2})\) to \(O(|P|)\).

We implement AUTIG and conduct extensive experiments, demonstrating that it significantly outperforms state-of-the-art symmetric protocols in both throughput and latency, making fast, strong order-fairness a practical reality for BFT systems.

\section{Background}

\subsection{BFT Consensus and the Ordering Problem}
Byzantine Fault-Tolerant consensus protocols are fundamental to the correctness of distributed systems operating in adversarial environments. They enable a network of $n$ replicas to maintain a consistent, replicated state machine despite up to $f$ of them exhibiting arbitrary, malicious behavior. Modern leader-based protocols, such as HotStuff~\cite{yin2019hotstuff}, have become a standard due to their communication efficiency and performance characteristics under partial synchrony. These protocols provide two core guarantees that are essential for distributed applications: Safety, ensuring all correct replicas agree on the same, totally-ordered log of transactions, and Liveness, ensuring all valid transactions submitted by clients are eventually included in the log.

While these guarantees are powerful, they are narrowly focused on the what (the content of the log) and not the how (the ordering within the log). In a typical leader-based protocol, a designated leader replica is responsible for proposing a new block of transactions in each round. The leader collects transactions from clients, sequences them into a list, and proposes this list for agreement. Other replicas vote on the proposed block, but their validation checks are typically limited to protocol rules like signature correctness or transaction validity, not the fairness of the internal ordering. This grants the leader complete autonomy over the selection and sequencing of transactions. This unilateral ordering power, while not violating the core safety or liveness properties of BFT, creates a critical and exploitable vulnerability: transaction order manipulation. A self-interested or malicious leader can leverage this power to reorder, delay, or strategically insert its own transactions to extract value. This phenomenon poses a tangible threat to the integrity of financial systems built on blockchains \cite{daian2020flash, malkhi2022maximal}.

\subsection{Order-Fairness and the Challenge}
\label{sec:background_order_fairness}

To mitigate leader-driven order manipulation, recent work has proposed order-fairness as a new safety property for BFT consensus. The central idea is to shift ordering authority from a single leader to the collective observation of the network. Two design lines have emerged. The first, using synchronized clocks as in Wendy and Pompe~\cite{zhang2020byzantine, kursawe2020wendy}, is fragile in partially synchronous settings due to clock skew and adversarial timestamping. The second, now dominant, line pursues relative order-fairness, using only the relative arrival order of transactions at replicas. Its ideal form, receive-order-fairness, requires that if a $\gamma$-fraction of correct replicas receive $tx_1$ before $tx_2$, then $tx_1$ must precede $tx_2$ globally. We follow the Themis-style \(\gamma\)-parameterization with \(\tfrac{1}{2}<\gamma\le 1\), where \(\gamma\) denotes the fraction of \emph{all} replicas (including Byzantine) whose observed precedence must constrain delivery. We formalize the “no-later-than” semantics using batch indices—i.e., \(b(tx_1)\le b(tx_2)\) whenever at least a \(\gamma\)-fraction report \(tx_1\prec tx_2\); the precise definition and feasibility bound are given in Sec.~\ref{sec:model}.

However, strict receive-order-fairness is unachievable due to the Condorcet paradox~\cite{gehrlein1983condorcet}. This classic problem from social choice theory shows that aggregating individual transitive preferences (e.g., each replica's consistent local timeline) can yield a cyclic majority relation. For instance, a majority might observe $tx_1 \prec tx_2$, another $tx_2 \prec tx_3$, and yet another $tx_3 \prec tx_1$, creating an unresolvable deadlock. Practical protocols therefore adopt batch-order-fairness~\cite{kelkar2020order,kelkar2023themis}: if a $\gamma$-fraction observe $tx_1$ before $tx_2$, then $tx_1$ is delivered no later than $tx_2$. This crucial relaxation allows transactions in a Condorcet cycle to be delivered together in a "batch," ensuring progress. Yet, it introduces a new liveness challenge: these cycles can become "chained" across rounds, potentially delaying finalization indefinitely (weak liveness). Systems like Themis~\cite{kelkar2023themis} address this via deferred ordering. Our work follows this relative, clock-free line, targeting the performance bottlenecks of its state-of-the-art implementations.

\subsection{Evolution of Graph-Based Architectures}
\label{sec:background_architectures}
The practical realization of batch-order-fairness has been shaped by a series of foundational, graph-based protocols. This evolution reveals a clear trajectory: from initial theoretical models that solved the impossibility of strict ordering but suffered from liveness flaws, to robust monolithic systems that introduced performance bottlenecks, and finally to decoupled architectures that improved latency but retained core computational inefficiencies. We analyze this progression in detail below and summarize the key distinctions in Table~\ref{tab:comparison}.

\textbf{Aequitas: The Foundational Graph Model and Its Liveness Flaw.} The initial breakthrough in applying batch-order-fairness came from \textbf{Aequitas}~\cite{kelkar2020order}. Aequitas was the first to formalize a graph-based approach to circumvent the Condorcet Paradox. Its core mechanism involves the leader collecting local orderings from replicas and constructing a dependency graph. In this graph, a directed edge $(tx_u, tx_v)$ exists if a supermajority of replicas observe $tx_u$ before $tx_v$. Transactions involved in a cycle (a Strongly Connected Component) are correctly identified and delivered together as an unordered batch, thus satisfying the batch-order-fairness definition.

While this elegantly solved the impossibility problem of receive-order-fairness, it introduced a critical weak liveness issue, a limitation later analyzed in detail by Themis~\cite{kelkar2023themis}. In the Aequitas model, Condorcet cycles can be "chained" together across multiple rounds. A transaction entering the system early could become part of an SCC, which in a subsequent round might merge with another SCC formed by newly arriving transactions. This process can repeat, creating arbitrarily long dependency chains. Consequently, a transaction could be indefinitely delayed from finalization, violating the standard liveness guarantees required by BFT systems.

\textbf{Themis: Achieving Liveness in a Monolithic, Symmetric Architecture.} Building directly on Aequitas's foundation, Themis~\cite{kelkar2023themis} provided a landmark solution that achieved standard BFT liveness while retaining strong fairness guarantees. Themis inherited the core graph-based methodology but introduced a more sophisticated workflow to solve the weak liveness problem: (1) Collect and Update. Each replica sends not only its locally observed sequence of new transactions (local orders) but also information about pairwise orderings of previously seen but unfinalized transactions (updates). (2) Generate and Classify Dependency Graph. Themis builds a dependency graph from \(n-f\) signed local orders using a non-blank/edge threshold
\(T_{\text{edge}}=T_{\text{non-blank}}=\lfloor n(1-\gamma)+\gamma f+1\rfloor\).
Each transaction is classified per batch by its support:
\emph{solid} if it appears in at least \(n-2f\) local orders,
\emph{blank} if it falls below \(T_{\text{non-blank}}\),
and \emph{shaded} otherwise. (3) Extract Fair Order with Liveness Anchor. To handle Condorcet cycles and ensure progress, the leader identifies SCCs. It then performs a topological sort on the resulting condensation graph. A prefix of the fair order is extracted up to the last SCC that contains at least one solid transaction. This solid transaction acts as a liveness anchor, guaranteeing that the finalized prefix grows over time. The internal ordering of transactions within an SCC is resolved using a deterministic algorithm, such as finding a Hamiltonian cycle. (4) Propose and Symmetrically Verify. The leader broadcasts this ordered list and supporting data in a block. Crucially, to guard against a malicious leader, all other replicas must verify the correctness of this ordering by re-executing the entire, computationally intensive process from the same raw inputs.

While robust and complete, this design cemented two primary performance bottlenecks. The first is the algorithmic bottleneck: building the dependency graph for a batch of $|B|$ transactions requires $\mathcal{O}(|B|^2)$ pairwise weight calculations. The second, and more architecturally critical, is the symmetric verification model. The requirement for every replica to independently re-run the leader's logic tightly couples this expensive ordering computation with the critical path of the underlying BFT consensus. The entire system's progress is gated by this synchronous "order-then-verify" cycle, forcing system-wide, redundant computation and fundamentally limiting throughput and latency.

\textbf{SpeedyFair: Decoupling for Latency Hiding.} The SpeedyFair protocol~\cite{mu2024separation} later made a crucial contribution by directly addressing the architectural bottleneck of tight coupling. Its core insight was to decouple the fair ordering process from the BFT consensus critical path. SpeedyFair proposed a pipelined architecture with a dedicated Optimistic Fair Ordering (OFO) protocol that runs concurrently with the main BFT consensus. In this OFO layer, a "virtual leader" gathers local orders and broadcasts a notification containing the selected inputs. All replicas then compute the fair order for this batch in parallel. The resulting ordered "fragment" is certified with a quorum certificate, which attests to the successful completion of the fair ordering process for that fragment. The main BFT protocol then simply needs to agree on these pre-ordered and certified fragments.

This decoupling effectively hides the ordering latency from the consensus critical path, as the system can be finalizing a previously ordered fragment while the next one is being computed, thus improving end-to-end performance. However, while SpeedyFair elegantly solved the problem of architectural coupling, it did not address the underlying algorithmic and verification inefficiency. The fair ordering mechanism within its OFO process still requires a full, from-scratch reconstruction of the dependency graph in each round. Both the leader and the verifying followers must still perform the same expensive, redundant $\mathcal{O}(|B|^2)$ computation. The fundamental problem of costly, symmetric verification persists, merely relocated to a parallel pipeline.

\begin{figure*}[t]
  \centering
  \includegraphics[width=0.9\textwidth]{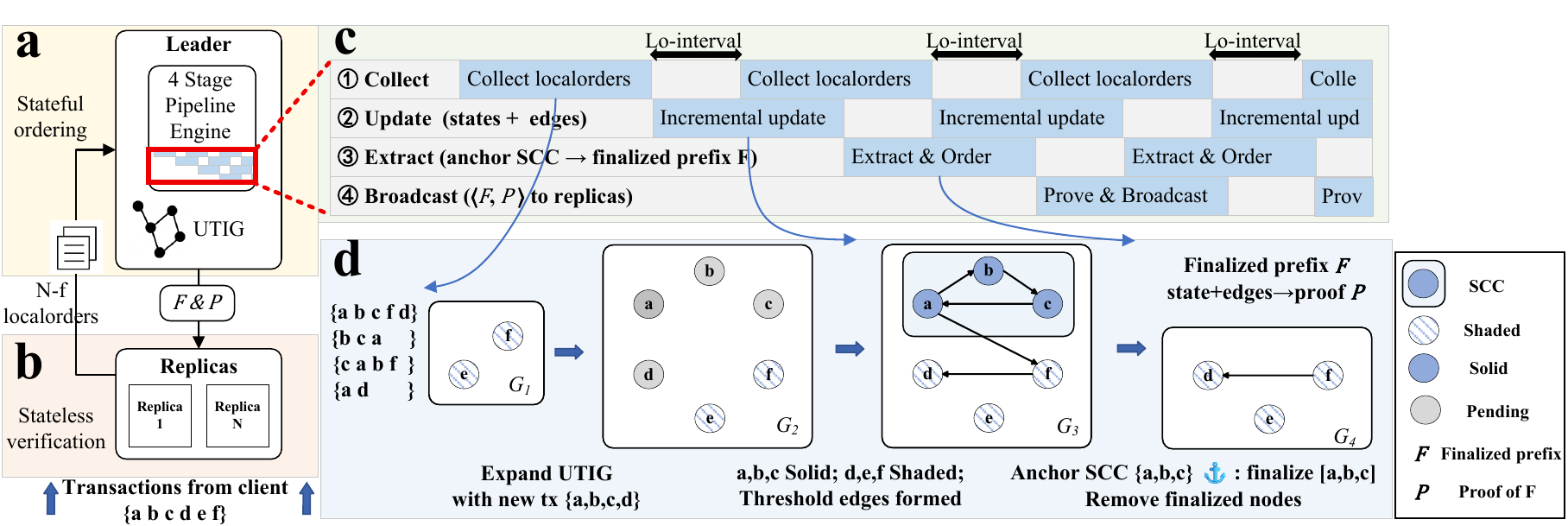}
\caption{Overview of the AUTIG architecture.
\textbf{Panels.} (a) \emph{Leader (stateful ordering)} that maintains the UTIG and executes the pipeline; (b) \emph{Replicas (stateless verification)}—clients first submit transactions to replicas, which then send $n\!-\!f$ signed \emph{LocalOrders} to the leader; (c) the overlapped four–stage pipeline timeline; (d) the corresponding UTIG evolution for this batch.
\textbf{Workflow.} The leader maintains a persistent UTIG across rounds and executes a four-stage pipeline: \ding{172} Collect $n-f$ $LocalOrders$ and add their transactions as \emph{Pending} nodes; \ding{173} Incrementally update supports to classify states ($a,b,c$ become Solid; $d,f,e$ remain Shaded), accumulate weights $W \leftarrow W_{\text{prev}}+W_{\text{batch}}$, and re-orient only affected edges; \ding{174} Extract \& Order by running Tarjan/condensation on the non-blank subgraph and selecting the last SCC with a Solid node as the anchor, yielding a finalizable prefix; \ding{175} Prove \& Broadcast by attaching state and edge-weight proofs, enabling stateless verification at replicas.
}

  \label{fig:autig-overview}
\end{figure*}

\section{System Model and Preliminaries}
\label{sec:model}

\subsection{System and Network Model}

\textbf{System Participants.} We consider a system composed of a static set of $n$ replicas, $\mathcal{R} = \{r_1, \dots, r_n\}$, which together provide a fair ordering service. One replica is designated as the order leader, responsible for proposing the final transaction sequence, while others act as followers. External clients $\mathcal{C}$ submit transactions to all replicas for inclusion.

\textbf{Network Model.} We assume the partially synchronous model~\cite{dwork1988consensus}: after an unknown GST, messages between correct replicas are delivered within a bounded delay~$\Delta$. AUTIG is a modular service that outputs proof-carrying, fairly ordered fragments for a downstream BFT. For the ordering path, after GST, if an honest Order Leader can reach at least $n{-}f$ correct replicas and obtain their signed $LocalOrder$ messages, a valid fragment is produced within a constant number of message exchanges—i.e., progress depends on actual delays rather than on~$\Delta$—and follower verification is a local $O(|P|)$ audit with no extra synchrony assumptions. We assume the hosting stack eventually installs an honest Order Leader; under this assumption AUTIG adds no synchrony requirements beyond Themis~\cite{kelkar2023themis}, and when composed with an optimistic-responsive BFT such as HotStuff~\cite{yin2019hotstuff}, end-to-end commit latency is inherited from the base protocol. We do not assume eventual inclusion of any specific round-$r$ honest $LocalOrder$; precedence evidence not counted in round $r$ can be re-supplied in later rounds via fresh $LocalOrders$, and cumulative weights are across-round sums. 

\subsection{Adversarial Model}
We consider a computationally bounded Byzantine adversary $\mathcal{A}$ that actively seeks to violate the protocol's fairness and safety properties.

\textbf{Byzantine Replicas.}
The adversary controls a set of up to $f$ replicas, denoted $\mathcal{R}_B \subset \mathcal{R}$. The adversary is adaptive, meaning it can choose which replicas to corrupt at any point during the protocol's execution. A corrupted replica $r \in \mathcal{R}_B$ may exhibit arbitrary, malicious behavior by deviating from the protocol in any way. This includes, but is not limited to: (1) Message Manipulation. Sending malformed or conflicting messages to different correct replicas. (2) Selective Communication. Withholding messages or selectively delaying their transmission (within the constraints of the network model). (3) Lying about Observations. Falsifying the reception time or relative order of transactions when constructing its $LocalOrder$ messages. (4) Collusion. Coordinating actions with other corrupted replicas in $\mathcal{R}_B$ to manipulate the outcome of the ordering process.

A replica $r \in \mathcal{R}_C = \mathcal{R} \setminus \mathcal{R}_B$ that is not corrupted is called correct. We assume the adversary cannot break the underlying cryptographic primitives.

\textbf{Network Adversary.} The adversary $\mathcal{A}$ also has partial control over the network fabric. It can reorder and delay messages up to the bounds defined by the partial synchrony model. Formally, for any message $m$ sent by a correct replica $r_i \in \mathcal{R}_C$ at time $t \ge \text{GST}$ to another correct replica $r_j \in \mathcal{R}_C$, the adversary can control its delivery time $t_{delivery}$, but must ensure $t_{delivery} \le t + \Delta$. The adversary cannot partition correct replicas indefinitely after GST. This power allows the adversary to strategically influence the perceived arrival order of transactions at different replicas.

\subsection{Protocol Objectives}
AUTIG is designed as a modular service that produces a stream of ordered transaction batches. When integrated with a secure BFT consensus protocol, the combined system must satisfy BFT properties alongside the core fairness guarantee.

\textbf{Safety and Liveness.} The system must uphold the standard properties of state machine replication: \textbf{Safety:} All correct replicas deliver the same sequence of transaction batches. This is primarily guaranteed by the underlying BFT consensus protocol that consumes AUTIG's output. \textbf{Liveness:} Every valid transaction submitted by a client and received by a sufficient number of correct replicas is eventually included in a batch delivered by all correct replicas.

\textbf{Batch-Order-Fairness.} Following Themis~\cite{kelkar2023themis}, we adopt a batch-based notion parameterized by \(\gamma\in(\tfrac{1}{2},1]\), where \(\gamma\) denotes the fraction of all replicas observing a precedence.
Let the global delivery be partitioned into consecutive batches \(C_1,\dots,C_k\), and write \(b(\mathrm{tx})=j\) iff \(\mathrm{tx}\in C_j\).
For any two transactions \(\mathrm{tx}_1,\mathrm{tx}_2\), if at least a \(\gamma\)-fraction of all replicas report \(\mathrm{tx}_1\prec\mathrm{tx}_2\), the protocol must ensure \(b(\mathrm{tx}_1)\le b(\mathrm{tx}_2)\).
Equivalently, \(\mathrm{tx}_1\) may appear in an earlier batch than \(\mathrm{tx}_2\) or co-appear in the same batch (to resolve Condorcet cycles), but it can never be finalized in a batch strictly after that of \(\mathrm{tx}_2\). We inherit the Themis relation among \((n,f,\gamma)\) required to achieve the above property together with standard liveness under partial synchrony: \(n>\frac{2f(\gamma+1)}{2\gamma-1}\) with \(\tfrac{1}{2}<\gamma\le 1\), which at \(\gamma=1\) yields the familiar Byzantine threshold \(n\ge 4f+1\).
The factor \(2\gamma-1\) captures the honest–honest intersection needed to make majority precedence decisive against up to \(f\) Byzantine votes; we follow Themis’s convention so thresholds and bounds align consistently across the paper.

Batch-order-fairness constrains only inter-transaction precedence; it is deliberately silent about the order within the same SCC/batch. Large SCCs may thus permit intra-batch MEV. To remove leader control, we linearize each SCC by the lexicographic order of $H(\mathrm{txid}\parallel\salt_r)$, where $\salt_r$ is the per-round salt bound in the fragment header; this tie-break leaves Theorem~\ref{thm:verif_sc} intact.

\subsection{Cryptographic Assumptions}
Our protocol's security relies on standard cryptographic primitives and assumptions. \textbf{Digital Signatures.} We assume a Public Key Infrastructure (PKI) where each replica possesses a unique, unforgeable public–private key pair; all protocol messages (e.g., local orders) are digitally signed to ensure authenticity and integrity. \textbf{Cryptographic Hash Function.} We assume a collision-resistant hash \(H(\cdot)\) to generate compact, unique digests for transactions and protocol data; we use domain-separated hashes to avoid cross-purpose collisions and to prevent leader grinding in tie-breaking. Let \(H\) be collision-resistant; we define \(\id{salt}_r \coloneqq H(H_{\mathcal F,\prev}\parallel r\parallel \leaderpk\parallel \domsalt)\) and \(H_{\mathcal F} \coloneqq H(\domfrag\parallel \Enc(F,\mathcal L_{\batch},\mathcal P,H_{\mathcal F,\prev}))\).

\subsection{Foundations of Graph-Based protocol}
AUTIG builds upon the graph-based ordering paradigm. We now formally define the key concepts that constitute the inputs and internal state of our ordering algorithm.

\textbf{Local Order.} The primary input to the ordering process is the local order from each replica. A local order from replica $r_i$, denoted $L_i$, is a signed message containing a sequence of transaction identifiers, $\langle tx_1, tx_2, \dots \rangle$, ordered according to their reception time at $r_i$. This serves as replica $r_i$'s "vote" on the transaction ordering for a given round.

\textbf{Dependency Graph and Cumulative Weights.} The collective ordering preferences of the network are aggregated and modeled as a directed dependency graph $G = (V, E)$. The set of vertices $V$ corresponds to the set of active, unconfirmed transactions. The formation of edges in $E$ is based on cumulative evidence from local orders.

For any pair of transactions $(tx_u, tx_v)$, their cumulative ordering weight, denoted $W(u,v)$, is the total number of local orders in which $tx_u$ precedes $tx_v$, accumulated across all rounds of the protocol. An edge $(u,v) \in E$ exists if the weight $W(u,v)$ surpasses a system-wide threshold, signifying a supermajority agreement on this relative order.

\textbf{Transaction States and Thresholds.} In each round the leader admits exactly \(n{-}f\) signed $LocalOrder$ messages (the first \(n{-}f\) admissible by arrival); if fewer are available, no proposal is formed and the leader keeps collecting. A transaction with per-round support \(c\) (the number of admitted orders that list it) is classified using two public thresholds derived from \((n,f,\gamma)\): \emph{Solid} if \(c\ge T_{\text{solid}}\) with \(T_{\text{solid}}=n-2f\); \emph{Blank} if \(c<T_{\text{non-blank}}\) with \(T_{\text{non-blank}}=\lfloor n(1-\gamma)+\gamma f+1\rfloor\); \emph{Shaded} otherwise. We also set \(T_{\text{edge}}=T_{\text{non-blank}}\) to orient edges and ignore blanks (and their incident edges) in the extraction graph; blank transactions simply do not participate this round and may become non-blank in later rounds as support accrues. Intuition in brief: \(T_{\text{solid}}=n-2f\) guarantees at least \(n-3f\ge f+1\) honest supporters, yielding a robust liveness anchor that cannot be suppressed by \(f\) Byzantine replicas, while \(T_{\text{non-blank}}=T_{\text{edge}}\) ensures any oriented precedence has honest support that dominates the worst-case mix of Byzantine votes and minority honest reversals, thus enforcing the “no-later-than” constraint \(b(tx_1)\le b(tx_2)\). For brevity we write \(\textsc{DetermineState}(c)\in\{\Solid,\Shaded,\Blank\}\) for the mapping defined above; states are recomputed per round from that batch’s supports only (no cross-round carry-over), and we use \(T_{\text{edge}}=T_{\text{non-blank}}\) when orienting edges.

\subsection{Verifiable Fragments and Proofs}
\label{sec:model_fragments}

To facilitate a decoupled and asymmetric architecture, the communication between the order leader and followers is encapsulated in a well-defined data structure: the $VerifiableFairOrderFragment$. This structure serves not only as the proposal unit but also as the foundation for stateless verification and secure state handoff.

A fragment, denoted $\mathcal{F}$, is a tuple $\mathcal{F} = \langle H_{\mathcal{F}}, F, \mathcal{L}_{\text{batch}}, \mathcal{P} \rangle$, containing: $F$. The $FinalOrder$, a proposed list of transaction IDs whose global order is being finalized in this fragment. $\mathcal{L}_{\text{batch}}$: The batch of $n-f$ signed $LocalOrders$ that were used as input to generate this proposal. This provides the raw evidence for the current round. 
$\mathcal{P}$: The $ProofData$, a structured proof containing all assertions needed for a follower to verify $F$ without historical state. It consists of: \textbf{(i) State assertions} for every $y\in F$; \textbf{(ii) Internal-pair totals} for each unordered $\{u,v\}\subseteq F$, i.e., $(W(u,v),W(v,u))$, to verify all dependencies inside $F$; and \textbf{(iii) Frontier completeness}. Let $B_r=\{x:\textsc{DetermineState}(x;\mathcal{L}_{\batch})\neq\Blank\}$ be the set of non-blank transactions in the current batch. The proof lists totals for every boundary pair $(x,y)\in(B_r\setminus F)\times F$; followers check that none of these pairs satisfies the public edge predicate into $F$ (i.e., $\neg P(x\!\to\!y)$ for all), which certifies that $F$ is \emph{down-closed} and thus safely finalizable.
 $H_{\mathcal{F}}$: A cryptographic digest of the canonical representation of $\langle F, \mathcal{L}_{\text{batch}}, \mathcal{P} \rangle$. This digest is the object on which the underlying BFT protocol reaches consensus. It ensures the integrity and tamper-evidence of the entire fragment once committed to the log. By committing $H_{\mathcal{F}}$, the system creates an immutable, universally agreed-upon record of not just the ordering decision ($F$), but also its complete justification ($\mathcal{L}_{\text{batch}}, \mathcal{P}$).

\section{The AUTIG Protocol}
\label{sec:protocol}

This section details the design of AUTIG, a modular service that provides high-performance, verifiable order-fairness. AUTIG builds upon the theoretical foundations of graph-based fairness established by Themis~\cite{kelkar2023themis} but introduces a fundamentally different, asymmetric architecture designed for high throughput and scalability. It achieves this by strategically decoupling the stateful, computationally intensive ordering task from the lightweight, stateless verification task. An overview of the AUTIG architecture is shown in Fig.~\ref{fig:autig-overview}.

\subsection{Overall Design}

The central innovation of AUTIG is its asymmetric state and computation model. We partition the system's operation into two distinct but coordinated engines: a stateful ordering engine exclusive to the leader, and lightweight verification engines at all follower replicas.

\textbf{The Leader's Stateful Ordering Engine:} The leader maintains a persistent, long-lived data structure we call the Unconfirmed-Transaction Incremental Graph. The UTIG serves as the stateful heart of the system, dynamically tracking the ordering dependencies of all non-finalized transactions. Its purpose is to amortize the cost of graph construction across many rounds, avoiding the expensive process of rebuilding the dependency graph from scratch in each proposal. In each round, the leader incrementally updates this graph using a new batch of $LocalOrders$ from replicas. It then executes a deterministic algorithm to extract a prefix of transactions whose order can be safely finalized. It also generates a compact, self-contained proof that cryptographically attests to the correctness of its ordering decision.

\textbf{The Replicas' Stateless Verification Engine:} Follower replicas run a lightweight verification engine. Replicas do not maintain a persistent copy of the leader's complex dependency graph. Instead, they perform efficient, stateless validation using the self-contained proof generated by the leader. The proof provides the necessary context for followers to audit the leader's actions without requiring any historical state. This transforms verification from a costly $\mathcal{O}(|B|^2)$ re-computation into a fast, resource-efficient audit, significantly reducing the storage and computation burden on followers and simplifying their recovery logic.

This architecture is realized through a continuous, pipelined workflow. The leader's engine is composed of concurrent stages that handle collecting inputs, updating the stateful graph, and extracting orders. This design allows AUTIG to operate as a pluggable service. The leader's engine produces a stream of fragments, which are then consumed and finalized by any underlying BFT consensus protocol after lightweight verification by the followers.

\subsection{The UTIG}
\label{sec:utig_definition}

To efficiently compute and maintain the state of the logical dependency graph over time, the leader utilizes the UTIG. The UTIG is a concrete data structure that materializes the abstract concepts from Section~\ref{sec:model}. Formally, the UTIG, denoted $G_{\text{utig}}$, is a tuple $(V, W, \text{States}, E, R)$, where $V$: The set of vertices, representing all transaction IDs currently being tracked. $W$: A map storing the cumulative ordering weight $W(u,v)$ for each transaction pair. This is the physical storage for the weights defined conceptually in the model. $\text{States}$: A map storing the current state for each transaction in $V$, as determined by its support in the most recent batch.  $E$: The set of directed dependency edges. An edge $(u,v) \in E$ exists if and only if both transactions are non-blank and a deterministic weight-based rule is met. This set is explicitly maintained and updated. $R$: A map storing the round number in which each transaction was last seen. This metadata is specific to AUTIG's implementation and is used for garbage collection (pruning).

The orientation of edges in \(E\) is governed by a deterministic and anti-symmetric predicate \(P(u\to v)\): \(P(u\to v)\equiv [W(u,v)\ge T_{\text{edge}}]\land([W(v,u)<T_{\text{edge}}]\ \lor\ [W(u,v)>W(v,u)]\ \lor\ ([W(u,v)=W(v,u)]\land u<v))\).
This rule ensures that if a supermajority preference exists, an edge is formed. In cases of mutual supermajority, the edge points from the transaction with stronger support, with ties broken deterministically. The leader can perform incremental updates to the graph, amortizing the computational cost across many rounds instead of rebuilding it from scratch.

\subsection{Protocol Workflow and Algorithms}

The AUTIG protocol proceeds in a continuous, pipelined fashion. We describe the key stages and the distinct roles of the leader and replicas.

\subsubsection{Phase 1: Local Order Generation and Collection}

\textbf{Replica's Role:} Each correct replica $i$ continuously receives transactions from clients into a local mempool. It timestamps and orders them by their arrival time. Periodically, or upon a request from the leader for a specific round $r$, it constructs a $LocalOrder$ message, $L_i$. This message contains a sequence of transaction IDs from its mempool, ordered locally, and is tagged with the round number $r$. The replica then signs the entire message and sends it to the current leader.

\textbf{Leader's Role (Collector Stage):} The leader's collector stage aggregates signed $LocalOrder$ messages for the current round. To form a batch, the leader waits until it has received at least $n-f$ admissible messages. Let $S_r$ be the set of all admissible $LocalOrders$ received for round $r$. The leader then deterministically constructs the batch $\mathcal{L}_{\text{batch}}$ by selecting exactly $n-f$ messages from $S_r$ based on the lexicographical order of the signing replicas' public keys (or identifiers). Any admissible messages not selected are deferred to the next round.

Before selection, the leader enforces a set of critical admissibility checks on each incoming $L_i$: \textbf{Signature Verification.} The digital signature must be valid. \textbf{Round Validity:} The round number must match the current round $r$. \textbf{Uniqueness:} At most one valid $LocalOrder$ per replica per round is considered.

This deterministic selection process, based on stable replica identifiers, removes any ``first-arrival'' advantage present in prior work. An adversary cannot influence which honest replicas are included in the batch by manipulating network delivery times. While the set of included honest replicas may vary slightly from round to round, any batch of size $n-f$ is guaranteed to contain at least $n-2f$ honest inputs, and the protocol's use of cumulative weights across rounds ensures that fairness properties are preserved in the long run.

\subsubsection{Phase 2: Incremental UTIG Update}
This phase is executed exclusively by the leader's updater stage. This process is detailed in Algorithm~\ref{alg:incremental_update}. The update unfolds in three logical steps.

\textbf{Step 1: State Update.} 
The first step is to determine how the new batch of observations affects the status of each transaction. The leader iterates through every transaction mentioned in $\mathcal{L}_{\text{batch}}$ and calculates its support count—the number of local orders in the batch that contain it. Based on this count, it re-evaluates each transaction's state according to the public thresholds $T_{\text{solid}}$ and $T_{\text{non-blank}}$. A transaction's state can change, or remain the same. The leader meticulously tracks the set of transactions whose states have changed, which we call $DirtyNodes$. 

\textbf{Step 2: Cumulative Weight Update.}
Let $W_{\batch}(u,v)$ be the number of local orders in \(\mathcal{L}_{\batch}\) where \(u\) precedes \(v\). The cumulative weights are append-only across rounds: \(W(u,v)\leftarrow W(u,v)+W_{\batch}(u,v)\).
The leader updates the cumulative evidence for pairwise orderings. It processes each $LocalOrder$ $L_i$ in the batch. For every ordered pair of transactions $(u, v)$ appearing in $L_i$ (where $u$ comes before $v$), the leader increments the cumulative weight $W(u,v)$ stored in its UTIG. During this process, the leader also identifies $DirtyPairs$—pairs of transactions for which the weight update has caused one of their directional weights to cross the $T_{\text{edge}}$ threshold for the first time. 

\textbf{Step 3: Optimized Edge Recomputation.}
The final step is to update the set of directed edges $E$ in the UTIG. A naive approach would be to re-evaluate all $\mathcal{O}(|V|^2)$ possible edges, nullifying the benefits of the incremental model. Instead, AUTIG uses a highly optimized procedure. It computes a minimal set of $AffectedPairs$ that require re-evaluation. This set is the union of the $DirtyPairs$ and all pairs that are adjacent to any of the $DirtyNodes$. For each pair $\{u,v\}$ in this small set, the leader first removes any existing edge between them and then re-applies the deterministic predicate $P(\cdot \to \cdot)$. If the updated weights and states now satisfy the predicate for either $u \to v$ or $v \to u$, a new directed edge is added. Edges for pairs outside this affected set remain unchanged, dramatically reducing the computational work compared to a full graph reconstruction.

\begin{algorithm}[h!]
\caption{Incremental UTIG Update}
\label{alg:incremental_update}
\footnotesize
\Input{Batch $\mathcal{L}_{\text{batch}}$, round $r$; UTIG $G_{\text{utig}}{=}(V,W,S,E,R)$}
\Output{Updated $G_{\text{utig}}$}
$DirtyNodes\!\gets\!\varnothing$;\; $DirtyPairs\!\gets\!\varnothing$\;
$Supp\!\gets\!\textsc{CountBatchSupport}(\mathcal{L}_{\text{batch}})$\;
\tcp{CountBatchSupport returns per-tx support counts in the \emph{current} batch}
\ForEach{$(tx,c)\in Supp$}{
  \textsc{AddIfAbsent}$(V,tx)$;\; $R[tx]\!\gets\!r$\;
  $old\!\gets\!S[tx]$;\; $S[tx]\!\gets\!\textsc{DetermineState}(c)$\;
  \If{$S[tx]\!\neq\!old$}{$DirtyNodes.\textsc{add}(tx)$}
}
$W_{\batch}\!\gets\!\textsc{CountBatchPairWeights}(\mathcal{L}_{\text{batch}})$\;
\ForEach{pair $(u,v)$ with $W_{\batch}(u,v)>0$}{
  $old\!\gets\!W(u,v)$;\;
  $W(u,v)\!\gets\!old+W_{\batch}(u,v)$\;
  \If{$old\!<\!T_{\text{edge}}$ \textbf{and} $W(u,v)\!\ge\!T_{\text{edge}}$}{
    $DirtyPairs.\textsc{add}(\{u,v\})$
  }
}
$V_{\text{nb}}\!\gets\!\{\,x\in V: S[x]\neq \Blank\,\}$\;
$Affected\!\gets\!DirtyPairs\;\cup\;\{\{u,v\}\!:\,u\!\in\!DirtyNodes,\,v\!\in\!V_{\text{nb}},\,u\!\neq\!v\}$\;
\ForEach{unordered $\{u,v\}\in Affected$}{
  $E.\textsc{remove}(u\!\to\!v)$;\; $E.\textsc{remove}(v\!\to\!u)$\;
  \If{$S[u]\!=\!\Blank$ \textbf{or} $S[v]\!=\!\Blank$}{\textbf{continue}}
  \tcp{Orientation predicate $P(\cdot)$ from §\ref{sec:model}}
  \If{$P(u\!\to\!v)$}{$E.\textsc{add}(u\!\to\!v)$}
  \ElseIf{$P(v\!\to\!u)$}{$E.\textsc{add}(v\!\to\!u)$}
}
\end{algorithm}

\subsubsection{Phase 3: Order Extraction and Proof Generation}
\label{sec:extract_and_prove_details}
After updating the UTIG, the leader's proposer stage attempts to extract a new prefix of transactions whose order can be safely finalized. This multi-step process, formalized in Algorithm~\ref{alg:extract_and_prove}, is designed to ensure both fairness and liveness, and culminates in the generation of a self-contained proof.

\textbf{Step 1: Constructing the Extraction Subgraph.}
The leader begins by creating a temporary graph, $G_{\text{extract}}$, which is a subgraph of the UTIG. This subgraph contains the non-blank nodes and the directed edges between them. This step focuses the subsequent, computationally intensive graph algorithms solely on the set of transactions that are currently relevant for ordering, filtering out those with insufficient support.

\textbf{Step 2: Identifying Ordering Dependencies and Cycles.}
The leader then analyzes the topology of $G_{\text{extract}}$ to understand the collective ordering preferences. It runs Tarjan's algorithm to identify all Strongly Connected Components. The leader then constructs the condensation graph, $G_{\text{condensed}}$, an acyclic graph where each node represents an entire SCC from $G_{\text{extract}}$. The edges in this new graph represent the dependencies between these cycles.

\textbf{Step 3: Establishing a Partial Order.}
By performing a topological sort on the acyclic condensation graph, the leader obtains a linear ordering of the SCCs. This represents a valid partial order of transaction groups that respects all non-cyclical supermajority preferences.

\textbf{Step 4: The Solid Anchor Principle for Liveness.}
The leader inspects the topological sort to find the last SCC in the sequence that contains at least one transaction in the \(\Solid\) state. This SCC is designated as the solid anchor. The protocol commits to finalizing all transactions up to and including this anchor. The presence of a solid transaction guarantees that the finalized prefix of the log will grow over time, as solid transactions are, by definition, widely observed and their state is stable. If no solid anchor is found, no transactions can be finalized in this round.

\textbf{Step 5: Extracting and Linearizing the Final Order Prefix.}
The leader now constructs the final ordered list of transactions, $F$. It traverses the topological sort from the beginning up to the solid anchor. For each SCC encountered: If the SCC contains a single transaction, that transaction is appended to $F$. For multi-transaction SCCs, we apply a leader-independent, hash-based linearization: sort members by the lexicographic order of $H(\id{txid}\,\|\,\id{salt}_r)$.  This approach is consistent for all replicas and satisfies the requirements of batch-order-fairness.

\textbf{Step 6: Generating the Proof of Fairness.} If a non-empty prefix $F$ is extracted, the leader emits $\mathcal{P}$ as specified in Section~\ref{sec:model_fragments}: state assertions for all $y\in F$, internal-pair totals for every $\{u,v\}\subseteq F$, and the frontier-completeness assertions covering all $(B_r\setminus F)\times F$. This self-contained proof enables fast, stateless auditing that $F$ is both fairly ordered and safely finalizable.

\begin{algorithm}[h!]
\caption{Extract Fair Prefix and Generate Proof}
\label{alg:extract_and_prove}
\footnotesize
\Input{UTIG $G_{\text{utig}}{=}(V,W,S,E,R)$; batch $\mathcal{L}_{\text{batch}}$}
\Output{Final order $F$, proof $\mathcal{P}$}
$V_{\text{nb}}\!\gets\!\{\,tx\in V: S[tx]\neq \Blank\,\}$;\;
$G_{\text{ext}}\!\gets\!(V_{\text{nb}}, E|_{V_{\text{nb}}})$\;
$SCCs\!\gets\!\textsc{Tarjan}(G_{\text{ext}})$;\;
$(G_c,\text{sccOf})\!\gets\!\textsc{Condensation}(SCCs)$;\;
$Topo\!\gets\!\textsc{TopoSort}(G_c)$\;
$\mathrm{anchor}\!\gets\!\textsc{LastIndexWithSolid}(Topo,SCCs,S)$\;
\If{$\mathrm{anchor}\!=\!\bot$}{\KwRet $(\varnothing,\bot)$}
$F\!\gets\![\,]$\;
\For{$i$ over $Topo[1..\mathrm{anchor}]$}{
  $C\!\gets\!SCCs[i]$\;
  \eIf{$|C|\!=\!1$}{$F.\textsc{append}(C[1])$}{$F.\textsc{append}(\textsc{DeterministicLinearize}(C,\salt_r))$}
}
$\mathcal{P}.\text{states}\!\gets\!\{(tx,S[tx])\,:\,tx\in F\}$\;
$\mathcal{P}.\text{infix}\!\gets\!\{(u,v,W(u,v),W(v,u))\,:\,u\neq v,\,u,v\in F\}$\;
$Supp_{\batch}\!\gets\!\textsc{CountBatchSupport}(\mathcal{L}_{\batch})$\;
$B_r\!\gets\!\{\,x:\ \textsc{DetermineState}(Supp_{\batch}[x])\neq \Blank\,\}$\;
$X\!\gets\!(B_r\setminus F)$;\; \tcp{active non-blank outside the prefix}
\tcp{emit the full Cartesian product in a canonical order}
\ForEach{$x$ in \textsc{SortByTxid}$(X)$}{
  \ForEach{$y$ in \textsc{SortByTxid}$(F)$}{
    $\mathcal{P}.\text{frontier}.\textsc{append}\big(x,y,\,W(x,y),\,W(y,x)\big)$\;
  }
}
\KwRet $(F,\mathcal{P})$
\end{algorithm}

\subsubsection{Phase 4: Proposal Broadcast and Verification}
\label{sec:protocol_verification}

\textbf{Leader's Role:}
The leader assembles the final proof-carrying fragment \(\mathcal F=\langle \hdr,\,F,\,\mathcal L_{\batch},\,\mathcal P\rangle\) with header \(\hdr=\langle r,\,\leaderpk,\,H_{\mathcal F,\prev},\,\salt_r,\,H_{\mathcal F}\rangle\) and computes the chained commitment \(H_{\mathcal F}=\Commit\!\big(F,\,\mathcal L_{\batch},\,\mathcal P,\,H_{\mathcal F,\prev};\,\domfrag\big)\) and broadcasts \(\mathcal F\).
The header fixes \(\salt_r\) and binds the fragment to its predecessor; replicas vote only after verifying the header and the full fragment.

\textbf{Follower's Role:}
 Upon receiving a fragment $\mathcal{F}$, a follower replica performs a fast, stateless verification. It does not consult any local historical state; all the information it needs is contained within the fragment itself. 

\textbf{Step 1: Integrity and Consistency Checks.}
The follower verifies the integrity of the fragment by re-computing its digest and ensuring it matches $H_{\mathcal{F}}$. It checks the consistency of the state assertions. For each transaction in $F$, it computes the support count from the provided $\mathcal{L}_{\text{batch}}$ and verifies that the resulting state matches the state asserted in $\mathcal{P}$.

\textbf{Step 2: Auditing Weights and Edge Correctness.}
The follower then audits the core of the fairness proof: the weight assertions. For each pair of transactions $\{u,v\}$ covered by the proof, it first calculates the weight contribution from the current batch, $W_{\text{batch}}(u,v)$, by counting occurrences in $\mathcal{L}_{\text{batch}}$. It then subtracts this batch weight from the total cumulative weight asserted in the proof, $w_{uv}$. The result is the implied historical weight. Check 1 (Non-Negative History): This implied history must be non-negative. A negative value would mean the leader fabricated historical support that never existed, so the proof is immediately rejected. Check 2 (Edge Direction): The follower uses the asserted total weights $w_{uv}$ and $w_{vu}$ to re-evaluate the edge predicate $P(\cdot \to \cdot)$. It verifies that the resulting edge direction (or lack thereof) is consistent with the graph structure implied by the proof. Check 2' (Frontier Soundness): The follower performs this check for the boundary pairs as well. It verifies that for every active transaction $x \in B_r \setminus F$, the predicate $P(x \to y)$ is false for all $y \in F$. This confirms that the prefix $F$ is down-closed and no unfinalized transactions have a valid dependency pointing into it.

\textbf{Step 3: Reconstructing and Validating the Extraction Logic.}
Having verified the soundness of the proof's underlying data, the follower performs the final check. It uses the now-trusted state and weight assertions from $\mathcal{P}$ to construct its own temporary, in-memory verification subgraph, $G_{\text{verify}}$. This graph is guaranteed to be identical to the one the leader used for extraction. The follower then re-runs the \textit{exact same deterministic extraction logic} on $G_{\text{verify}}$ (Tarjan's, condensation, topological sort, solid anchor identification, and deterministic linearization). The resulting final order, $F'$, must be identical to the order $F$ proposed by the leader. If it matches perfectly, the proof is valid. Only if all these checks pass does the follower consider the fragment's ordering fair and vote to accept it in the consensus layer. 

\begin{algorithm}[h!]
\caption{Stateless Verification at Follower}
\label{alg:verification}
\footnotesize
\Input{$\mathcal{F}=\langle \hdr,F,\mathcal{L}_{\batch},\mathcal{P}\rangle$, 
$\hdr=\langle r,\leaderpk,H_{\mathcal F,\prev},\salt_r,H_{\mathcal F}\rangle$}
\Output{accept / reject}

\If{$\mathcal{P}=\bot$}{\KwRet $(|F|=0)$}

\If{$\salt_r \neq H(H_{\mathcal F,\prev}\parallel r\parallel \leaderpk\parallel\domsalt)$ \textbf{or}
    $H_{\mathcal F}\neq H(\domfrag\parallel \Enc(F,\mathcal{L}_{\batch},\mathcal{P},H_{\mathcal F,\prev}))$}{\KwRet reject}

$Supp\gets\textsc{CountBatchSupport}(\mathcal{L}_{\batch})$;\;
$W_{\batch}\gets\textsc{CountBatchPairWeights}(\mathcal{L}_{\batch})$\;

\For{$(tx,\hat s)\in \mathcal{P}.\text{states}$}{
  \If{$\textsc{DetermineState}(Supp[tx])\neq \hat s$}{\KwRet reject}
}

$B_r\gets\{x:\textsc{DetermineState}(Supp[x])\neq \Blank\}$;\;
\If{$|\mathcal{P}.\text{infix}|\neq |F|\cdot(|F|-1)$}{\KwRet reject}
\If{$|\mathcal{P}.\text{frontier}|\neq |(B_r\setminus F)|\cdot|F|$}{\KwRet reject}

\For{$(a,b,w_{ab},w_{ba})\in \mathcal{P}.\text{infix}\cup\mathcal{P}.\text{frontier}$}{
  \If{$w_{ab}<W_{\batch}(a,b)$ \textbf{or} $w_{ba}<W_{\batch}(b,a)$}{\KwRet reject}
}

$G_v\gets$ graph on $F$ with edge $u\!\to\!v$ iff $P(u\!\to\!v;w_{uv},w_{vu})$ for $(u,v,\cdot,\cdot)\in\mathcal{P}.\text{infix}$\;

\For{$(x,y,w_{xy},w_{yx})\in \mathcal{P}.\text{frontier}$}{
  \If{$x\notin B_r$ \textbf{or} $y\notin F$}{\KwRet reject}
  \If{$P(x\!\to\!y;w_{xy},w_{yx})$}{\KwRet reject}
}

$SCCs\gets\textsc{Tarjan}(G_v)$;\;
$Topo\gets\textsc{TopoSort}(\textsc{Condensation}(SCCs))$;\;
$\mathrm{anchor}\gets\textsc{LastIndexWithSolid}(Topo,SCCs,\mathcal{P}.\text{states})$;\;
\If{$|F|>0$ \textbf{and} $\mathrm{anchor}=\bot$}{\KwRet reject}

$F'\gets[\,]$;\;
\For{$i$ over $Topo[1..\mathrm{anchor}]$}{
  $C\gets SCCs[i]$;\;
  $F'.\textsc{append}(|C|=1?\,C[1]:\textsc{DeterministicLinearize}(C,\salt_r))$;}

\KwRet $(F'=F)$
\end{algorithm}

\subsubsection{Consensus and Finalization}

If a follower's verification of $\mathcal{F}$ succeeds, it participates in the underlying BFT protocol (e.g., HotStuff) by voting affirmatively on the leader's proposal containing $\mathcal{F}$. If the BFT protocol reaches consensus on the proposal, all correct replicas deterministically commit the final order $F$ to their state machines for execution. The leader also removes the now-finalized transactions from its UTIG to maintain its bounded size. To ensure the UTIG remains bounded over long-term operation, AUTIG incorporates a secure pruning mechanism, detailed in Appendix~\ref{app:pruning}, which is designed to be invisible to the stateless verification process.

\subsubsection{Dual leadership}
To preserve the amortized benefits of this incremental graph, AUTIG decouples the \emph{Order Leader}, which maintains the UTIG and produces proof-carrying fragments, from the \emph{BFT consensus leader}, which drives agreement on fragment digests. The Order Leader is intended to be long-lived and changes only upon verifiable misbehavior. Decoupling keeps ordering progress independent of consensus view changes and ensures that a state handoff is invoked only when fairness is directly compromised.

\textbf{Fault detection.} Stateless verification (Alg.~\ref{alg:verification}) is deterministic. A follower that rejects a fragment $\mathcal F$ holds a non-repudiable complaint (the committed digest $H_{\mathcal F}$ plus the leader’s signature). Collecting $f{+}1$ complaints for the same digest triggers an order-leader handoff.

\textbf{State recovery.} The new Order Leader reconstructs UTIG state from the last committed fragment digest by fetching the full fragment, checking its hash, and replaying the proof to re-initialize nodes, states, and cumulative weights; a one-shot recovery batch re-synchronizes the frontier. Full protocol details and proofs are in Appendix~\ref{app:handoff}.

\subsection{Security and Liveness Analysis}
\label{sec:analysis}
We prove that AUTIG satisfies order-fairness and BFT safety/liveness under the system model of \ref{sec:model}, relying on Algorithms~\ref{alg:incremental_update}--\ref{alg:verification} and the public thresholds and edge predicate $P(\cdot)$ defined therein. A complexity analysis is provided in Appendix~\ref{app:complexity}.

\subsection{Safety and Order-Fairness}
We first show that any accepted fragment is exactly the canonical fair prefix and that any canonical fair prefix is accepted. The crux is that the proof carried in a fragment lets followers losslessly reconstruct the leader’s extraction view.

\begin{theorem}[Soundness and Completeness of Stateless Verification]
\label{thm:verif_sc}
Algorithm~\ref{alg:verification} accepts a fragment $\mathcal F=\langle \hdr,F,\mathcal L_{\batch},\mathcal P\rangle$ iff $F$ is the unique output of the canonical extractor (Tarjan $\to$ condensation DAG $\to$ topological order $\to$ solid-anchor cutoff $\to$ deterministic intra-SCC linearization) on the fair-order graph induced by cumulative weights and the current batch; hence $F$ satisfies $\gamma$-batch-order fairness.
\end{theorem}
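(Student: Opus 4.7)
The plan is to prove the biconditional by splitting it into two implications and then derive the $\gamma$-batch-order fairness clause from the canonical extractor's threshold regime. The argument leverages the fact that every sub-routine invoked after the data checks in Algorithm~\ref{alg:verification} (Tarjan, condensation, topological sort, solid-anchor selection, and hash-based intra-SCC linearization seeded by $\salt_r$) is a pure deterministic function, so two parties that agree on the input graph and the salt must produce byte-identical outputs. I would structure the proof around the claim that the four categories of checks (header/commitment, state, weight/edge, frontier) successively pin down the salt, the vertex labels, the induced subgraph on $F$, and the down-closedness of $F$ in the condensation DAG.

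For the completeness direction, I would show that an honestly produced fragment passes every conditional. The header equations hold by construction; $\mathcal P.\text{states}$ is computed from the same $\textsc{DetermineState}\circ\textsc{CountBatchSupport}$ that the follower reruns on $\mathcal L_{\batch}$; the cumulative totals asserted for internal and frontier pairs are the leader's append-only tallies, which pointwise dominate $W_{\batch}(\cdot,\cdot)$; and the frontier list is constructed as exactly $(B_r\setminus F)\times F$ in canonical order, with no $(x,y)$ satisfying $P(x\!\to\!y)$, because otherwise $x$ would have been absorbed into an SCC at or before the solid anchor and belong to $F$. Determinism of the pipeline then gives $F'=F$.

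For soundness, assume acceptance. The header check authenticates $\salt_r$ as a function of $H_{\mathcal F,\prev}$ and commits $F,\mathcal L_{\batch},\mathcal P$ via $H_{\mathcal F}$, so no component can be silently altered. The state loop forces $\mathcal P.\text{states}$ to agree with the follower's independent classification, so downstream uses of $\textsc{LastIndexWithSolid}$ and of the $\Blank$-guard coincide with the leader's. The cardinality checks on $\mathcal P.\text{infix}$ and $\mathcal P.\text{frontier}$ together with the non-negativity checks $w_{ab}\ge W_{\batch}(a,b)$ leave no freedom to understate weights; combined with the deterministic, anti-symmetric predicate $P(\cdot)$, the reconstructed graph $G_v$ on $F$ is exactly the restriction to $F$ of the fair-order graph induced by the asserted cumulative weights, which is the object referenced in the theorem statement.

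The main obstacle is the frontier completeness check, which must certify that $F$ is a safely finalizable down-closed prefix even though the follower maintains no historical graph state. I would handle it in two observations. First, the cardinality equation $|\mathcal P.\text{frontier}|=|B_r\setminus F|\cdot|F|$ forces the proof to enumerate the entire Cartesian product, and each entry is validated by $x\in B_r$, $y\in F$, and $\neg P(x\!\to\!y)$; therefore no non-Blank vertex outside $F$ has an admitted incoming edge into $F$. Second, $\Blank$ vertices are excluded from the extraction subgraph by the guard in Algorithm~\ref{alg:incremental_update}, so they cannot contribute hidden edges either. With $G_v$ fixed and $F$ down-closed, the remaining reconstruction is deterministic, so $F'=F$ forces $F$ to be the unique canonical extractor output. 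The fairness clause then follows from the Themis-style choice $T_{\text{edge}}=T_{\text{non-blank}}=\lfloor n(1-\gamma)+\gamma f+1\rfloor$: any $\gamma$-fraction majority precedence $tx_1\prec tx_2$ manifests as an oriented edge $tx_1\!\to\!tx_2$ in the fair-order graph, and the canonical extractor respects this edge by placing $tx_1$ in an earlier SCC than $tx_2$ or in the same SCC, yielding $b(tx_1)\le b(tx_2)$.
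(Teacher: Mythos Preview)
Your proposal is correct and follows essentially the same approach as the paper: both directions hinge on the determinism of the extraction pipeline once the proof data are validated, with the frontier cardinality-plus-per-pair check establishing down-closedness of $F$ among non-\Blank{} transactions. The only structural difference is that the paper factors the ``graph reconstruction equals leader's extraction view'' step into a separate Lemma~\ref{lem:graph_equiv} and then invokes it in the soundness direction, whereas you inline that same reasoning directly; the content and the order of the checks (header/salt, states, internal-pair weights, frontier completeness) are identical.
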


\begin{lemma}[Proof-Induced Graph Equivalence]
\label{lem:graph_equiv}
If $\mathcal F$ passes Algorithm~\ref{alg:verification}’s integrity checks (header binding, per-tx state recomputation from $\mathcal L_{\batch}$, pairwise weight non-negativity, and frontier completeness), then the follower’s verification graph is equivalent to the leader’s extraction view for the purpose of prefix extraction: inside $F$, orienting each pair $\{u,v\}$ by applying $P(\cdot)$ to the asserted totals $(w_{uv},w_{vu})$ yields the same internal edges as in the leader’s cumulative graph, and the frontier check certifies that for all $(x,y)\in(B_r\setminus F)\times F$ we have $\neg P(x\!\to\!y)$, so $F$ is down-closed among non-blank transactions.
\end{lemma}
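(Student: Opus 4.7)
The plan is to decompose the equivalence claim into three parts---state agreement, internal edge agreement on $F$, and frontier down-closedness---and dispatch each using the corresponding integrity check in Algorithm~\ref{alg:verification}.

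First, I would observe that the fragment carries $\mathcal{L}_{\batch}$ in the clear, so the follower recomputes per-transaction supports and applies $\textsc{DetermineState}$ directly; any mismatch with an asserted state triggers rejection. Consequently, the non-blank set $B_r$ is computed identically on both sides, which pins down the frontier $B_r\setminus F$ as a shared object between leader and follower, and promotes the asserted states in $\mathcal{P}.\text{states}$ to inputs the follower can treat as ground truth for subsequent checks.

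Next, for internal edges I would argue that the orientation predicate $P(\cdot)$ is a pure deterministic function of the asserted pair totals $(w_{uv},w_{vu})$ and the (now-agreed) states. The cardinality check $|\mathcal{P}.\text{infix}|=|F|(|F|-1)$ forces every ordered pair in $F$ to appear, and the non-negative-history check $w_{ab}\ge W_{\batch}(a,b)$ confirms the asserted totals are at least batch-consistent. Because the header $H_{\mathcal F}$ commits the leader to exactly these $(w_{uv},w_{vu})$ values via $\Commit$ and the chained $H_{\mathcal F,\prev}$, applying $P$ on each side to the same inputs must yield the same orientation edge-for-edge. For frontier down-closedness, the cardinality and membership checks on $\mathcal{P}.\text{frontier}$ guarantee every boundary pair $(x,y)\in(B_r\setminus F)\times F$ is inspected, and the explicit $\neg P(x\!\to\!y)$ test then certifies the absence of any valid external edge into $F$ among non-blank transactions.

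The hard part is subtler than the mechanical checks suggest: one must rule out a Byzantine leader who \emph{inflates} cumulative totals (beyond the non-negative-history lower bound) in order to fabricate a down-closed but spurious $F$---for example, by boosting $w_{yx}$ just enough to suppress a real external edge $x\!\to\!y$ in the frontier test. I would handle this by leaning on the hash-chained commitment: because $H_{\mathcal F}$ binds $(F,\mathcal{L}_{\batch},\mathcal{P},H_{\mathcal F,\prev})$ atomically and fragments form an immutable chain, any overstatement of a pair total would conflict with a value previously committed for that pair in an earlier fragment whose states and weights are already pinned down, and cumulative weights are append-only by construction. Once that inductive monotonicity argument is in place, single-fragment consistency suffices, and the follower's reconstructed graph agrees with the leader's extraction view on $F$ and the frontier---precisely the equivalence required for prefix extraction.
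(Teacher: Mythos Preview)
Your first three paragraphs track the paper's (brief) proof closely: the argument there is exactly the three-way split into (i) state re-derivation from $\mathcal L_{\batch}$, (ii) applying $P(\cdot)$ to the asserted internal-pair totals to reproduce the leader's orientation on $F$, and (iii) the frontier cardinality check plus the per-pair $\neg P(x\!\to\!y)$ test giving down-closedness of $F$ among non-blank transactions.

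Your fourth paragraph, however, overshoots the lemma's scope, and the argument you give there does not go through. As the paper states and proves it, the lemma only asserts that the follower reconstructs the graph induced by the \emph{asserted} totals---``the leader's extraction view'' is whatever graph the leader commits to via $\mathcal P$, and both sides then evaluate the same deterministic $P(\cdot)$ on the same $(w_{uv},w_{vu})$. The lemma does not claim those totals equal any ground-truth cumulative weight, so ruling out Byzantine inflation is not part of its content (that concern, to the extent it is handled, lives in the soundness direction of Theorem~\ref{thm:verif_sc}, not here). Even granting your stronger reading, the hash-chain argument would not close the gap: followers are \emph{stateless} by design and retain no per-pair weight history, so they cannot compare an asserted $w_{uv}$ against a value committed in an earlier fragment; the chained commitment $H_{\mathcal F}$ binds each fragment atomically and linearizes the sequence, but nothing in Algorithm~\ref{alg:verification} enforces cross-fragment weight monotonicity, and a given pair $(x,y)$ in this round's frontier need not have appeared in any earlier fragment at all. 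Drop the fourth paragraph; the lemma requires only the mechanical agreement your first three paragraphs already establish.
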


\begin{proof}[Proof of Theorem~\ref{thm:verif_sc}]
(\emph{Soundness}) If the verifier accepts, all integrity checks pass. By Lemma~\ref{lem:graph_equiv}, the verifier’s reconstruction coincides with the leader’s extraction view on $F$ and certifies that $F$ is down-closed w.r.t.\ active non-blank transactions. Running the deterministic extractor on the identical graph yields exactly $F$; SCC contiguity up to the last $\Solid$-containing SCC enforces $\gamma$-batch-order fairness.

(\emph{Completeness}) If $F$ is the canonical output on the true batch graph, the honest leader includes in $\mathcal P$ the per-tx states for $y\in F$, all internal-pair totals for $\{u,v\}\subseteq F$, and the full frontier $(B_r\setminus F)\times F$ with totals. Each total equals (history $+$ recomputed batch counts), so state checks, non-negativity, and frontier down-closure all pass; re-extraction reproduces $F$, hence the verifier accepts.
\end{proof}

\subsection{Liveness and State Boundedness}
We next show that AUTIG makes progress under partial synchrony and keeps the leader’s state bounded; the latter underpins stable throughput.

\begin{theorem}[Liveness with Explicit Bounds]
\label{thm:liveness_bounds}
Assume partial synchrony with delay $\Delta$ after GST and that the order leader waits a $\Delta$-long collection window before forming a batch from the first $n{-}f$ admissible $LocalOrder$s. Let $T_{\round}=T_{\collect}+T_{\update}+T_{\extract}+T_{\broadcast}$. For any valid $tx$ received by at least $n{-}2f$ correct replicas after GST, $tx$ becomes $\Solid$ within $2\Delta+T_{\collect}$ and is finalized within
 \(T_{\final}\le 2\Delta+(1+R_{\defer})\,T_{\round}+N_{\bad}\bigl(T_{\detect}+T_{\handoff}+T_{\recovery}\bigr)\).
\end{theorem}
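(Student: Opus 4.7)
The plan is to prove the two bounds in sequence, corresponding to the two natural phases of $tx$'s lifecycle: its rise to $\Solid$ status in the UTIG, and its subsequent finalization through the pipeline. I will first argue the Solid bound using post-GST message-delivery guarantees, then leverage the solid-anchor principle together with the protocol's deterministic extraction to bound finalization, carefully attributing the $R_{\defer}$ and $N_{\bad}$ terms to Condorcet cycles and Byzantine order-leader turnover, respectively.

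For the Solid bound, I would first observe that once GST has passed and $tx$ has been received by a set $H_{tx}$ of $n{-}2f$ correct replicas, each $r\in H_{tx}$ will include $tx$ in its next signed $LocalOrder$. Under partial synchrony, the leader's $\Delta$-long collection window begins no later than $\Delta$ after $tx$ has become eligible at these replicas, and all $n{-}2f$ such $LocalOrder$s arrive within an additional $\Delta$. Thus by time $2\Delta$ the leader's inbox contains at least $n{-}2f$ admissible signed orders mentioning $tx$; the $T_{\collect}$ portion of the round completes the batch formation, at which point $\textsc{CountBatchSupport}$ reports a count $c\ge n{-}2f=T_{\text{solid}}$, and $\textsc{DetermineState}$ classifies $tx$ as $\Solid$. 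This gives the first bound $2\Delta+T_{\collect}$ directly from the thresholds fixed in Section~\ref{sec:model}.

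For the finalization bound, I would argue that in the round in which $tx$ first becomes $\Solid$, Algorithm~\ref{alg:extract_and_prove} finds a solid anchor at or after $tx$'s SCC in the condensation DAG, so $tx$ is either finalized in this round or it lies in an SCC that has an outgoing dependency to a later SCC containing no solid node (deferred). Each such deferral is absorbed by a later batch that, by Theorem~\ref{thm:verif_sc} and the monotone accumulation of weights, eventually promotes the blocking component into a down-closed finalizable prefix; by the deferral accounting of Themis-style extraction adopted here, the number of such cascades is bounded by $R_{\defer}$, costing at most $R_{\defer}\cdot T_{\round}$. Adding the one round in which $tx$ first becomes $\Solid$ yields the $(1+R_{\defer})T_{\round}$ factor. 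For Byzantine order-leader behavior, I would invoke the handoff subprotocol: any fragment violating the stateless checks produces $f{+}1$ non-repudiable complaints within $T_{\detect}$, after which handoff and UTIG recovery complete within $T_{\handoff}+T_{\recovery}$; since each corruption consumes a distinct leader slot, at most $N_{\bad}$ such detours can occur before an honest Order Leader is installed.

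The main obstacle I anticipate is justifying the $R_{\defer}$ term rigorously: I must show that the chain of cross-round Condorcet cycles through which $tx$ can be dragged is finite and bounded, rather than growing indefinitely as in Aequitas. The argument hinges on the fact that cumulative weights are monotone append-only (Step 2 of Algorithm~\ref{alg:incremental_update}), so any edge crossing $T_{\text{edge}}$ remains crossed, and any SCC containing $tx$ can only merge with finitely many others before the last SCC with a solid node in the topological sort advances past $tx$'s component. A secondary subtlety is showing that batches formed \emph{after} handoff still include $tx$: this follows because the UTIG recovery procedure replays the last committed fragment and re-synchronizes the frontier, so $tx$'s accumulated weights and state are preserved, and the honest successor leader will re-admit $LocalOrder$s carrying $tx$ within one further $T_{\collect}$, which is already subsumed by the $T_{\round}$ accounting.
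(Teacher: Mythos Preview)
Your proposal is correct and follows essentially the same approach as the paper: both argue the $\Solid$ bound from post-GST delivery plus the $\Delta$-long collection window, then derive the finalization bound from the solid-anchor rule (contributing $(1+R_{\defer})T_{\round}$) together with detection/handoff/recovery of Byzantine order leaders (contributing the $N_{\bad}$ term, which the paper instantiates as at most $f$ under round-robin). Your treatment is in fact more detailed than the paper's---particularly your explicit flagging of the $R_{\defer}$ justification as the delicate step and your remark that UTIG recovery preserves $tx$'s accumulated state across handoffs---but the underlying decomposition and key ideas are the same.
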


\begin{proof}[Sketch]
After GST, propagation to $n{-}2f$ correct replicas and batching imply solidity by the next extraction; the solid-anchor rule finalizes a non-empty prefix per round with at most $R_{\defer}$ deferrals. Faulty leaders are detected by Theorem~\ref{thm:verif_sc} and replaced; at most $f$ appear consecutively under round-robin. Full proof in Appendix~\ref{app:sec6proofs}.
\end{proof}

\begin{theorem}[Pruning Safety and Live-Core Preservation]
\label{thm:pruning_safety}
Under the admissible pruning rules in Appendix~\ref{app:pruning}, the extracted prefix up to the solid anchor matches that from an unpruned UTIG; verification remains fragment-local; pruned transactions can be safely re-integrated upon reappearance.
\end{theorem}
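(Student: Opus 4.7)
The plan is to decompose the theorem into its three asserted guarantees and prove each by reducing it to the invariant maintained by admissible pruning: any pruned transaction is already past the solid anchor of some committed fragment and carries no active outgoing influence into the set of not-yet-finalized non-blank transactions. First I would restate the Appendix~\ref{app:pruning} rules as a predicate $\Pi(tx)$ that permits eviction only when $tx$ is contained in a committed prefix $F$ and its round metadata $R[tx]$ is sufficiently old that no subsequent admissible $LocalOrder$ can legally reference it under the collection window of Theorem~\ref{thm:liveness_bounds}.

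For extraction equivalence, I would proceed by induction on rounds, comparing the pruned UTIG $G$ with a hypothetical unpruned shadow $G^\star$. In the base case the two agree trivially. For the inductive step, assume that $G$ and $G^\star$ coincide on the set of non-pruned, non-blank vertices together with their mutual cumulative weights and states. When a new batch $\mathcal{L}_{\text{batch}}$ arrives, both runs of Algorithm~\ref{alg:incremental_update} consume the same input; any update that in $G^\star$ would land on a pruned vertex concerns a committed transaction which, by $\Pi$, cannot appear in any admissible $LocalOrder$ of the current round, so the incremental change restricted to non-pruned vertices is identical. Hence the extraction subgraph on non-blank non-pruned vertices is identical in the two runs, and Tarjan's SCCs, the condensation DAG, the topological sort, and the last-$\Solid$-anchor cutoff all yield the same prefix.

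For verification locality, I would appeal directly to Theorem~\ref{thm:verif_sc} and Lemma~\ref{lem:graph_equiv}. The fragment proof contains state assertions for every $y\in F$, internal-pair totals for each $\{u,v\}\subseteq F$, and the full frontier $(B_r\setminus F)\times F$, together with $\mathcal{L}_{\text{batch}}$. Every check in Algorithm~\ref{alg:verification} — header binding, state recomputation, non-negativity of implied history, predicate re-evaluation, and frontier down-closure — depends only on these fragment-local quantities and never on the leader's retained state. Pruning therefore only decreases the leader's internal footprint while preserving the totals it commits in subsequent proofs, so the verifier's audit is unaffected; in particular, the non-negativity check tolerates any residual history and so remains sound even when the leader no longer stores the pruned edges.

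For safe re-integration, I would model the reappearance of a previously pruned $tx$ as a fresh insertion through Algorithm~\ref{alg:incremental_update}, reseeding its entry in $R$ and its cumulative weights from the current batch. Its earlier finalization persists in the committed log and deduplication at the execution layer prevents double delivery. The main obstacle, and the step I expect to require the most care, is pinning down $\Pi$ tightly enough that pruning a vertex never alters an orientation or SCC membership that could influence some future prefix. Concretely, I would prove that pruning commutes with the incremental update on the non-pruned subgraph, reducing the claim to an algebraic identity on cumulative weights plus a combinatorial argument that no in-flight edge can cross from a pruned vertex back into the live core; the pathological case of stale precedence evidence arriving after eviction is handled by showing that such evidence is rejected by the Phase~1 admissibility checks, so its omission cannot violate $\gamma$-batch-order fairness for the live core.
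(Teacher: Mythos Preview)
Your proposal rests on a mischaracterization of the admissible pruning rules, and the argument collapses once $\Pi$ is corrected. You define $\Pi(tx)$ to require that $tx$ already lie in a committed prefix $F$ \emph{and} that its $R[tx]$ be stale enough that no future admissible $LocalOrder$ can reference it. Neither condition matches Appendix~\ref{app:pruning}: the rules permit pruning of (i) finalized transactions, (ii) current-round $\Blank$ transactions, and (iii) $\Shaded$ transactions strictly after the solid anchor in the topological order, possibly gated by a hysteresis horizon $H$. Cases (ii) and (iii) are precisely the interesting ones, since those transactions are \emph{not} in any committed prefix and may absolutely reappear in later $LocalOrder$s; nothing in the Phase~1 admissibility checks (signature, round tag, uniqueness) rejects such references. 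Your inductive step therefore fails at the sentence ``by $\Pi$, cannot appear in any admissible $LocalOrder$ of the current round,'' and the commutation argument you sketch at the end has no foundation.

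The re-integration paragraph inherits the same error and adds a second one: you propose to ``reseed'' cumulative weights from the current batch, but the Monotone-Weight Invariant and the soft-pruning default require that historical weights be \emph{retained} across eviction. Resetting them would violate append-only weights and could flip orientations. The paper's argument is correspondingly much simpler and needs no induction: $\Blank$ vertices have no incident edges in the extraction subgraph, so removing them is a no-op; $\Shaded$ post-anchor vertices lie strictly after the cutoff, so removing them cannot alter the prefix up to the anchor nor create any path back into it; verification locality follows exactly as you wrote (that part is fine); and safe re-integration is immediate because soft pruning preserves $W(\cdot,\cdot)$, so reactivation is $O(1)$ with no fairness regression. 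You should replace the $\Pi$-based induction with this direct structural argument on the three pruning categories.
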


\noindent\textit{Proof.} See Appendix~\ref{app:sec6proofs}. 

\begin{theorem}[Recovery Soundness and Convergence with Byzantine Replies]
\label{thm:recovery}
Given the last committed fragment digest $H_{\mathcal F_{\text{anchor}}}$ and any full fragment consistent with it, the new honest order leader reconstructs a UTIG that preserves fairness; historical weights are exact, and a recovery batch from any $\ge n{-}f$ admissible replies converges to the on-line state while rejecting malformed/duplicate/cross-round replies.
\end{theorem}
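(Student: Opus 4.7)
The plan is to decompose the theorem into three sub-claims and prove each using machinery already developed in the paper: (i) exact reconstruction of historical cumulative weights from the anchor fragment, (ii) convergence of the reconstructed UTIG to the on-line state after ingesting any admissible recovery batch of size at least $n{-}f$, and (iii) principled rejection of malformed, duplicate, and cross-round replies. The central obstacle will be (ii), since the specific $n{-}f$ replies selected during recovery need not coincide with those a continuously-running leader would have chosen; I must argue that fairness is preserved throughout and the reconstructed trajectory converges to the on-line one despite this freedom.

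For (i), I would anchor the argument on the collision-resistant chained commitment $H_{\mathcal F}=\Commit(F,\mathcal L_{\batch},\mathcal P,H_{\mathcal F,\prev};\domfrag)$. Any full fragment consistent with the committed digest is uniquely determined up to negligible probability, so the proof payload $\mathcal P$—which carries per-tx states over $F$, internal-pair totals for $\{u,v\}\subseteq F$, and the complete frontier totals for $(B_r\setminus F)\times F$—is a ground-truth snapshot of cumulative weights and states at the anchor round. The new leader seeds its UTIG vertices, re-derives states via \textsc{DetermineState} on $\mathcal L_{\batch}$ exactly as in Theorem~\ref{thm:verif_sc}, and initializes cumulative weights directly from $\mathcal P$; by Lemma~\ref{lem:graph_equiv}, the induced subgraph on $F\cup(B_r\setminus F)$ is identical to the departed leader's extraction view, so historical weights are exact on all anchor-covered pairs. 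For (iii), I would chain the Phase-1 admissibility checks already specified in Section~\ref{sec:protocol}: signature verification excludes forgeries, round-tag validation rejects cross-round replies, per-replica uniqueness discards duplicates, and syntactic checks reject malformed messages. These checks are local and deterministic on the signed payload, so their outcome is identical whether the receiver is recovering or on-line.

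For (ii), the crucial observation is that cumulative weights form an append-only additive structure—$W(u,v)\leftarrow W(u,v)+W_{\batch}(u,v)$ in Algorithm~\ref{alg:incremental_update}—so replaying any admissible batch of size $\ge n{-}f$ (which contains at least $n{-}2f$ honest $LocalOrder$s) produces a legitimate next UTIG snapshot. Even when the recovery batch and the on-line leader pick different subsets of admissible messages under the deterministic lexicographic rule, \emph{fairness is not compromised}: Theorem~\ref{thm:verif_sc} guarantees that every subsequent fragment emitted by the new leader is the canonical extractor output over its own audited view, hence $\gamma$-batch-order-fair regardless of which admissible subset was used for catch-up. Convergence to the on-line trajectory then follows from two properties already embedded in the model: (a) any honest precedence evidence not counted in a given round can be re-supplied in later rounds via fresh $LocalOrder$s, and (b) once a stable honest order leader is installed, the deterministic lexicographic selection makes every honest recovering leader identify the same $n{-}f$ admissible inputs per round. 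The hard part is formalizing that any transient gap introduced by the initial recovery batch closes within a bounded number of rounds; I would resolve this by combining (a)–(b) with the liveness bound of Theorem~\ref{thm:liveness_bounds}, which guarantees that enough new batches are processed after GST to absorb any missing honest evidence, after which the reconstructed UTIG coincides with the on-line one on all pairs that remain relevant for extraction.
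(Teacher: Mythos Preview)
Your three-part decomposition and the ingredients you invoke—digest authentication via collision resistance, the $\ge n{-}2f$ honest-input floor in any admissible $n{-}f$ batch, append-only cumulative weights, the Phase-1 admissibility filters, and fragment-local fairness via Theorem~\ref{thm:verif_sc}—match the paper's (much terser) proof. The one mechanical step you omit is that the paper recovers \emph{pre-anchor} historical totals by recomputing $W_{\batch}^{\text{anchor}}(u,v)$ from $\mathcal L_{\batch,\text{anchor}}$ and subtracting it from the asserted totals in $\mathcal P_{\text{anchor}}$, rather than seeding directly from $\mathcal P$ as you describe; conversely, your convergence argument via evidence re-supply and Theorem~\ref{thm:liveness_bounds} is more detailed than the paper's, which simply stops at append-only weights plus the honest-input floor.
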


\noindent\textit{Proof.} See Appendix~\ref{app:sec6proofs}. 

\subsection{Correctness of Performance Optimizations}
\begin{theorem}[Incremental Equivalence and Pipeline Linearizability]
\label{thm:incremental_equiv}
Let $\widetilde G_r$ be the state after Algorithm~\ref{alg:incremental_update} and $G_r^{\star}$ the from-scratch state up to round $r$. Then $\widetilde G_r\equiv G_r^{\star}$; with SWMR on UTIG, the concurrent pipeline is linearizable with the serialization point at the end of Algorithm~\ref{alg:incremental_update}.
\end{theorem}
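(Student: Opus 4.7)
The plan is to split the theorem into its two distinct claims---incremental equivalence of UTIG states and linearizability of the pipeline---and attack them with largely orthogonal techniques.

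For $\widetilde G_r \equiv G_r^{\star}$, I would proceed by induction on the round index $r$. The base case is trivial since both constructions start empty. For the inductive step, I would decompose the equivalence into four components (vertex set, cumulative weight map $W$, state map $S$, and edge set $E$) and establish each in turn. Vertex agreement is immediate. Weight agreement follows from associativity and commutativity of integer addition: by the inductive hypothesis the prior round's totals coincide, and both procedures add exactly $W_{\batch}(u,v)$ in round $r$. State agreement follows because \textsc{DetermineState} is a pure function of the current batch's support count (no cross-round carry-over), and both procedures invoke it on the same $\mathcal L_{\batch}$.

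The edge-equivalence step carries the real content. The invariant I would prove is that the $Affected$ set computed in Algorithm~\ref{alg:incremental_update} is a superset of every pair $\{u,v\}$ whose predicate value $P(u\!\to\!v)$ or $P(v\!\to\!u)$ could change between rounds. Since $P(\cdot)$ is a pure function of $\bigl(W(u,v),\,W(v,u),\,S[u],\,S[v]\bigr)$, an oriented edge can flip only if at least one of those four inputs moved. State changes are covered by $DirtyNodes$ together with its adjacency closure; first-time threshold crossings in either direction are captured by $DirtyPairs$. The subtle case is pairs whose weights both already exceed $T_{\text{edge}}$ and whose endpoint states do not change: there the tiebreaker clauses $[W(u,v)>W(v,u)]$ and $[W(u,v)=W(v,u)\land u<v]$ can still flip the oriented edge under batch increments that leave both sides above threshold. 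I would handle this either by proving the predicate is in fact invariant on this regime (which requires a careful case split on the relative batch contributions), or by arguing that $DirtyPairs$ should be read to include every pair touched by the batch, so that $Affected$ still dominates the set of predicate-flipping pairs and re-applying $P(\cdot)$ on $Affected$ while leaving other pairs untouched reproduces the from-scratch edge set.

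For the linearizability claim, I would make the SWMR discipline explicit: the updater is the sole writer on $(V,W,S,E,R)$, while the collector and extractor act as readers. I would place the serialization point concretely at the final store-release in Algorithm~\ref{alg:incremental_update}---realizable as a round-tag bump or a commit barrier that atomically publishes the new snapshot---and then invoke the standard visibility argument: each \textsc{Update}$(\mathcal L_{\batch})$ call appears atomic to any concurrent reader, so the per-round serialization points induce a total order on updates consistent with per-round real-time order. Collector queue operations touch an independent buffer and so do not race with graph reads. I expect the hard part to be the edge-equivalence case analysis above---pinning down precisely which weight/state transitions can flip $P(\cdot)$ and reconciling that taxonomy with the $DirtyPairs$ definition, since the naive reading (threshold-crossings only) appears to miss the tiebreaker-flip regime. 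The linearizability half is comparatively routine once the publication barrier is fixed.
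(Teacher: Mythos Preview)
Your approach is essentially the paper's: it too reduces equivalence to an edge-update lemma showing $Affected$ dominates the set of predicate-flipping pairs, and its linearizability argument is the same SWMR/write-lock snapshot with the serialization point at the updater's release. One clarification on your either/or: the first option fails---when both $W(u,v)$ and $W(v,u)$ already exceed $T_{\text{edge}}$, a batch increment can flip the $W(u,v)>W(v,u)$ tiebreaker without any threshold crossing or state change, so the predicate is \emph{not} invariant on that regime. The paper takes your second option, silently redefining $DirtyPairs$ in the appendix lemma as all batch-touched pairs with $\max\{W_{\text{new}}(u,v),W_{\text{new}}(v,u)\}\ge T_{\text{edge}}$ (broader than Algorithm~\ref{alg:incremental_update}'s threshold-crossing-only trigger); you correctly spotted that this broadening is necessary.
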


\noindent\textit{Proof.} See Appendix~\ref{app:optim_proofs}. 

\begin{figure*}[!t]
    \centering
    \begin{subfigure}[b]{0.49\textwidth}
        \centering
        \includegraphics[width=\linewidth]{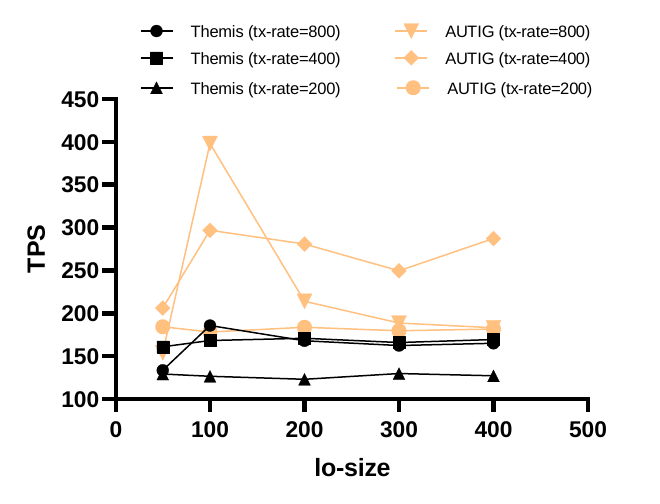}
        \caption{TPS vs. $lo-size$}
        \label{fig:tps}
    \end{subfigure}
    \hfill
    \begin{subfigure}[b]{0.49\textwidth}
        \centering
        \includegraphics[width=\linewidth]{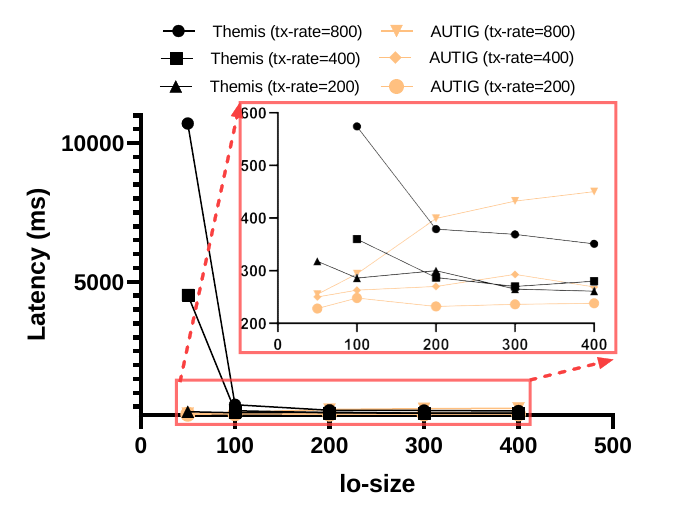}
        \caption{Latency vs. $lo-size$}
        \label{fig:latency}
    \end{subfigure}
    \caption{TPS and latency of AUTIG and Themis under varying $lo-size$ and $tx-rate$.}
    \label{fig:comparison}
\end{figure*}

\section{Evaluation}
\label{sec:evaluation}
Our work builds upon the architectural decoupling paradigm that separates the fair ordering process from the underlying BFT consensus. Our core innovation lies within the ordering layer itself. We evaluate AUTIG and compare it against Themis~\cite{kelkar2023themis}. This choice isolates architectural effects: both AUTIG and Themis implement the same graph-based fairness semantics and extraction pipeline, so any performance delta stems from AUTIG's asymmetric design, not from differences in the fairness algorithm. We do not report a separate SpeedyFair\cite{mu2024separation} baseline because, while SpeedyFair decouples ordering from consensus to hide latency, its fair-ordering engine retains the symmetric, from-scratch graph reconstruction and verification cost of Themis; when consensus-layer effects are factored out, its ordering dataplane reduces to Themis-like behavior. Accordingly, Themis serves as the canonical baseline for ordering throughput and latency.

\subsection{Experimental Setup}

\textbf{Implementation and Platform.}
We implemented a prototype of AUTIG in Go. All experiments were conducted on the AWS EC2 platform, using a network of dedicated instances for replicas and a separate set of client instances. The clients generate a saturated transaction workload to measure the systems' peak performance. Each transaction consists of a unique 256-byte payload; to optimize network bandwidth, protocol messages such as $LocalOrders$ transmit only the 32-byte hash of each transaction.

\textbf{System and Protocol Parameters.}
All experiments are run with a network of $n=21$ replicas tolerating $f=5$ Byzantine faults. We primarily consider the case where the Order Leader is correct. Per protocol rules, a leader must collect at least $n-f=16$ signed LocalOrders to form a batch. A transaction's state and the dependency graph structure are governed by public vote thresholds: it becomes \emph{non-blank} only if it appears in at least $\theta_{\text{nb}} = \lceil n(1-\gamma) + \gamma f + 1 \rceil = 8$ lists, and \emph{solid} if it appears in $n-2f=11$ lists. A directed edge $(u \to v)$ also requires at least $\theta_{\text{nb}}$ votes.

\textbf{Workload and Metrics.}
We evaluate two key protocol knobs that control batch formation: $lo-size$, the maximum number of transactions per $LocalOrder$, and $lo-interval$ ($\tau$), the cadence in milliseconds at which replicas emit $LocalOrders$. We report two primary metrics: throughput, measured in committed TPS, and end-to-end latency, measured from the timestamp of a transaction's submission by a client to its final BFT commitment. Each data point reported is the median of 10 runs to ensure robustness.

\subsection{Performance under Varying Load and Batch Size}

\textbf{Experiment Design.}
We fix the local order interval at $\tau=250$\,ms and vary $lo-size$ within $\{50, 100, 200, 400\}$. We test under high ($tx-rate$=800), moderate ($tx-rate$=400), and low ($tx-rate$=200) client submission rates.

\textbf{Results.} The results are shown in Fig.~\ref{fig:comparison}. Across all load conditions, AUTIG consistently achieves higher throughput than Themis, while maintaining comparable or lower latency. 

Under high load, this advantage is most pronounced. AUTIG's throughput exhibits a clear "knee" around a $lo-size$ of 100, reaching a peak significantly higher than Themis's best performance. Most critically, at a small $lo-size$ of 50, Themis suffers from a pathological latency spike, with median latencies exceeding 10 seconds. AUTIG, in the same scenario, maintains a low and stable latency, demonstrating its robustness. Under moderate and low loads, AUTIG maintains a consistent and clear throughput advantage across all tested batch sizes. 

\textbf{Mechanistic Explanation.}
The observed trends are a direct consequence of how each protocol handles inter-replica transaction visibility under the constraints of its voting thresholds. In Themis, each proposal is computed de novo from only that round's $LocalOrders$. Under high load, transactions from many clients compete for inclusion. When $lo-size$ is small, each replica's list captures only a narrow, somewhat random slice of this transaction pool. This leads to poor cross-replica co-visibility within a single round's collection. Consequently, very few transactions meet the $\theta_{\text{nb}}=8$ threshold to become non-blank, and even fewer pairs accumulate enough directional votes to form edges. The resulting dependency graph is sparse and far from a tournament, preventing immediate linearization. Finalization is deferred, forcing the system to wait for subsequent rounds to provide "update" messages to orient missing edges. This multi-round deferral, which we term the blocks-to-finalize ($BTF$) factor, becomes very large ($BTF \gg 1$), causing the extreme latency spike. As $lo-size$ increases, the co-visibility improves, more transactions cross the thresholds within a single round, $BTF$ shrinks toward 1, and performance rapidly improves until a "knee" is reached. Beyond this point, further increases in $lo-size$ mainly add redundant information and increase the $O($lo-size$^2)$ accounting overhead without enlarging the finalizable prefix.

AUTIG's stateful UTIG breaks this dependency on single-round co-visibility. It maintains a persistent frontier of all unconfirmed transactions and accumulates weights across rounds. Even if a pair $\{u,v\}$ receives only a few votes in the current round, this evidence is added to its historical total. This cumulative process ensures that the graph of active transactions is almost always dense and ready for extraction, keeping $BTF_{\text{utig}} \approx 1$ across a wide range of parameters. This is why AUTIG avoids the pathological latency tail and can extract a larger, stable-sized prefix each round.

\subsection{Impact of Local Order Interval}

\textbf{Experiment Design.}
We investigate the systems' sensitivity to the emission frequency $\tau$, fixing the workload at $tx-rate$=400 tx/s and $lo-size$=100, while varying $lo-interval$ from 25\,ms to 350\,ms.

\textbf{Results.}
The results are shown in Fig.~\ref{fig:latency_interval}. For AUTIG, latency decreases roughly linearly as $\tau$ shrinks. For Themis, throughput falls as $\tau$ decreases and latency rises once $\tau<100$\,ms; at $\tau=25$\,ms we observe a very high latency. At that point, AUTIG delivers several$\times$ higher throughput and over an order-of-magnitude lower latency.

\textbf{Mechanistic Explanation.}
We provide a compact analytical model in Appendix~\ref{app:lo-interval}. Briefly, for Themis small $\tau$ reduces per-round co-visibility, causing $BTF(\tau)$ to blow up; for AUTIG, the persistent UTIG keeps $BTF_{\text{utig}}\!\approx\!1$, making latency nearly linear in $\tau$.

\subsection{Summary of Experimental Findings}
By replacing Themis's stateless, per-round graph reconstruction with a stateful, incremental UTIG, AUTIG effectively solves the "co-visibility bottleneck." It accumulates ordering evidence across rounds, ensuring that a large, finalizable prefix of transactions is almost always available for extraction. This leads to higher throughput, lower and more predictable latency, and greater robustness to variations in workload and protocol parameters.

\begin{figure}[t]
  \centering
  \begin{subfigure}{0.8\linewidth}
    \centering
    \includegraphics[width=\linewidth]{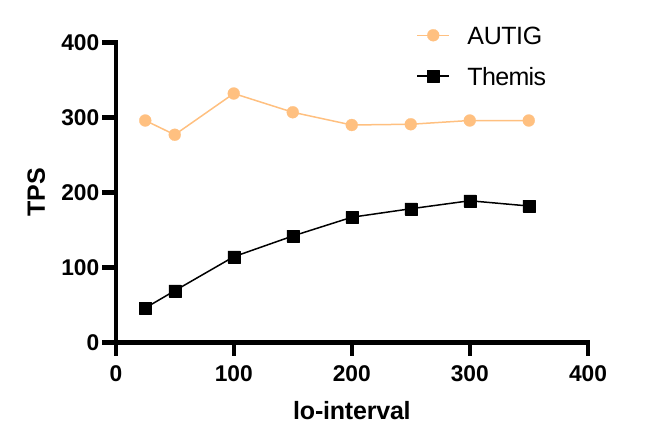}
    \caption{TPS ($tx-rate$=400, $lo-size$=100)}
    \label{fig:themis}
  \end{subfigure}

  \vspace{0.4em} 

  \begin{subfigure}{0.8\linewidth}
    \centering
    \includegraphics[width=\linewidth]{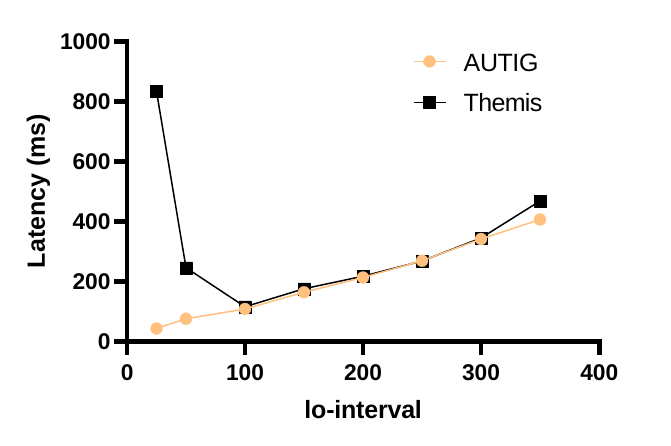}
    \caption{Latency ($tx-rate$=400, $lo-size$=100)}
    \label{fig:autig}
  \end{subfigure}

  \caption{TPS and latency of AUTIG and Themis under varying $lo-interval$.}
  \label{fig:latency_interval}
\end{figure}

\section{Conclusion}
We introduced AUTIG, a proof-carrying fair-ordering service that replaces symmetric re-execution with an asymmetric design: a leader maintains a persistent UTIG for incremental ordering while followers perform stateless proof audits. This removes system-wide, per-round graph reconstruction. AUTIG typically improves throughput and stabilizes latency while preserving \(\gamma\)-batch-order fairness. Limitations include proof size growth with the finalized prefix and sensitivity to pruning/handoffs; promising directions include commitment/zk-based proof compression, adaptive pruning, dynamic membership, and cross-shard ordering. AUTIG illustrates a design principle for BFT services: concentrate heavy state and computation where they can be amortized, and make the result efficiently auditable elsewhere; this asymmetry can help overcome the scalability barriers of symmetric verification.



\cleardoublepage
\appendix
\section*{Ethical Considerations}
\textbf{Stakeholders and benefits.}
Our work targets BFT ordering services used by permissioned/permissionless ledgers. Primary stakeholders are end users, application developers, validators/replicas, and exchange operators. By constraining leader discretion and reducing redundant re-execution, AUTIG aims to lower MEV-style value extraction and improve fairness and latency for all users.

\textbf{Risks and potential negative impacts.}
(1) \emph{Adversarial workloads.} Releasing ordering code could help adversaries craft worst-case inputs (e.g., SCC chains) to stress the system. We mitigate with liveness anchors, pruning, and formal verification conditions; we also include stress-test harnesses so operators can reproduce and harden against such patterns. 
(2) \emph{Security misconfiguration.} Incorrect deployment (thresholds, keys, or leader-handoff policy) could degrade guarantees. Our artifact includes default-safe configs, invariants, and automated checks.

\textbf{Data, privacy, and human subjects.}
We used synthetic workloads only; no personal data or user identifiers are processed. No human-subjects experiments were conducted; thus no IRB approval was required.

\textbf{Dual use and disclosure.}
The artifact ships with adversarial scenarios and detectors so operators can reproduce and patch issues. We follow responsible disclosure if any vulnerability in third-party stacks is encountered during reproduction.

\textbf{Environmental considerations.}
Experiments report machine counts and instance classes to make resource use transparent; the artifact provides small-scale repro scripts to limit compute/energy where full-scale replication is unnecessary.

\section* {Secure UTIG Pruning}
\label{app:pruning}
To remain efficient over long-term operation, AUTIG incorporates a secure pruning mechanism. The design is guided by a key principle: pruning is a leader-side optimization that must be invisible and inconsequential to the followers' stateless verification process. This is achieved by ensuring that any pruned information is provably irrelevant to the extraction of the current round's finalizable order prefix. The safety of this mechanism is guaranteed by three fundamental invariants. \textbf{Reproducibility Invariant:} Any proposed final order $F$ must be fully derivable from the information contained within its corresponding verifiable fragment $\mathcal{F}$. This ensures that verification is a self-contained process, completely independent of any state that the leader might have pruned locally. \textbf{Live-Core Invariant:} Before extracting an order in any round, the leader's UTIG must contain all nodes and edges necessary to correctly compute the finalizable prefix up to the solid anchor. Pruning must not remove information that is critical for the current round's liveness decision. \textbf{Monotone-Weight Invariant:} The cumulative weights $W(\cdot,\cdot)$ are append-only; they are never decreased or reset. This preserves all historical evidence for pairwise orderings, even if a transaction is temporarily deactivated from the active graph.

Governed by these invariants, the leader applies a set of admissible pruning rules that align with the deferred-ordering logic of prior art. First, any transaction included in a committed final order $F$ is immediately removed. Second, nodes that are classified as \(\Blank\) in the current batch can be pruned, as they have no incident edges by definition and thus do not affect the graph's connectivity. Third, after identifying the solid anchor, any \(\Shaded\) nodes that lie strictly after the anchor in the topological order may be pruned, optionally gated by a staleness horizon of $H$ rounds. Here, \(H\) acts as a safety-neutral hysteresis knob: a node becomes eligible only after remaining \(\Shaded\) and post-anchor for \(H\) consecutive extractions; larger \(H\) merely delays pruning to absorb late evidence. With soft pruning, fairness and liveness are unchanged because weights are append-only and pruned nodes can be reactivated.

By default, AUTIG implements soft pruning: admissible vertices and edges are removed from the active working set, but their historical weights and last-seen metadata are retained. This allows a pruned transaction to be reactivated in $O(1)$ if it reappears in a future batch, thus preserving liveness. Crucially, this entire pruning process is invisible to followers. Since verification relies solely on the self-contained proof, a faulty leader cannot exploit pruning to bias the final order; any such attempt would result in a proof that fails the completeness and anchor checks during verification.

\section*{Leader Handoff and State Recovery}
\label{app:handoff}
\subsubsection*{Order Fault Detection and Handoff Trigger}
If a follower replica's stateless verification (Algorithm~\ref{alg:verification}) of a received fragment $\mathcal{F}$ fails, it has deterministic proof of the leader's misconduct. It then constructs and broadcasts a signed $OrderFault$ message, containing the digest $H_{\mathcal{F}}$ and the original leader's signature on it as evidence. Because verification is deterministic and based on public rules, a correct Order Leader can never produce a fragment that honest replicas would deem invalid. Therefore, upon collecting $f+1$ such signed $OrderFault$ messages for the same digest $H_{\mathcal{F}}$, all correct replicas deterministically trigger the Order Leader Handoff Protocol and elect a successor (e.g., via a round-robin schedule). This threshold guarantees that at least one correct replica has detected the fault, ruling out spurious accusations from a minority of malicious nodes.

\subsubsection*{Lightweight State Recovery Protocol}
Upon election, the new Order Leader must reconstruct a valid UTIG to resume its duties. It executes a secure and efficient lightweight state recovery protocol. This protocol crucially avoids a direct state transfer from the old leader. Instead, it leverages the immutable history committed by the underlying BFT layer as a foundation for safe state reconstruction. The protocol proceeds in three steps. 

\textbf{Identify and Retrieve the Anchor Fragment:} The new Order Leader first inspects the chain of committed BFT blocks to find the last block containing a digest $H_{\mathcal{F}_{anchor}}$ of a valid order fragment. This digest serves as the universally agreed-upon anchor point. A fundamental property of BFT protocols is that replicas only vote for a block after receiving the full data corresponding to the digests within it. Therefore, the new leader can securely retrieve the complete anchor fragment, $\mathcal{F}_{anchor} = \langle F_{anchor}, \mathcal{L}_{\text{batch,anchor}}, \mathcal{P}_{anchor} \rangle$, from any correct replica by presenting the committed digest $H_{\mathcal{F}_{anchor}}$. The integrity of the retrieved $\mathcal{F}_{anchor}$ is verified by re-computing its digest and matching it against the one committed in the BFT chain. 

\textbf{Initialize UTIG from the Anchor Proof:} With the trusted anchor fragment $\mathcal{F}_{anchor}$ in hand, the new leader initializes a fresh UTIG. The recovery of state is a deterministic process derived from the proof $\mathcal{P}_{anchor}$. The new leader populates its UTIG as follows: Node and State Initialization. For every transaction $tx \in F_{anchor}$, it adds a node to its UTIG and sets its state according to the state assertion in $\mathcal{P}_{anchor}$. Historical Weight Reconstruction. For every edge assertion $(u,v, w_{uv}, w_{vu})$ in $\mathcal{P}_{anchor}$, the leader reconstructs the \textit{implied historical weight}. It first calculates the weight contribution from the batch, $W_{\text{batch}}(u,v)$, using $\mathcal{L}_{\text{batch,anchor}}$. Then, it sets the historical weight in its new UTIG as $G_{\text{utig}}.W(u,v) \leftarrow w_{uv} - W_{\text{batch}}(u,v)$. This value represents the minimum, verifiable historical support for the ordering $u \rightarrow v$ prior to the anchor batch. Edge Initialization. Based on these reconstructed weights and states, it initializes the directed edges $E$ within this restored portion of the graph. This process deterministically restores a partial but critical UTIG state that correctly reflects all ordering decisions up to the anchor point, ensuring a seamless transition of fairness guarantees. 

\textbf{Rapid State Convergence via Recovery Batch:}  The UTIG is now correctly initialized for all recently finalized transactions. To account for all other unconfirmed transactions, the new leader broadcasts a $RecoveryRequest$. This prompts all correct replicas to respond with a signed $LocalOrder$ containing all non-finalized transactions in their current view. By aggregating these responses into a single, large recovery batch and performing a large-scale update on its partially initialized UTIG, the new leader rapidly converges its graph to the collective state of the network.

For any pair \((u,v)\) present in the anchor fragment, the reconstructed historical weight
\(\widehat{W}^{\text{hist}}(u,v)=w_{uv}^{\text{anchor}}-W_{\batch}^{\text{anchor}}(u,v)\)
is the \emph{exact} cumulative weight before the anchor batch; subsequent rounds add to it append-only. 
Transactions not in the anchor start with \(W(\cdot,\cdot)=0\) and are (re)introduced by the recovery batch.
Frontier pairs in the anchor are used only to guarantee down-closure of the committed prefix; they do not create hidden state: verification in future rounds recomputes \(B_r\) from each round's \(\mathcal{L}_{\batch}\).

\section*{Proofs for §4: Correctness of Optimizations}
\label{app:optim_proofs}

\begin{lemma}[Correctness of Incremental Edge Updates]
\label{lem:incremental_correctness_app}
Let $W_{\text{new}}=W_{\text{old}}+W_{\batch}$ be cumulative weights after the round, let $DirtyNodes$ be transactions whose state toggles to/from $\Blank$, and define
\[
\begin{aligned}
DirtyPairs \;=\; \bigl\{\{u,v\}\,:\,
& W_{\batch}(u,v)+W_{\batch}(v,u)>0,\;\\
& \max\{W_{\text{new}}(u,v),W_{\text{new}}(v,u)\}\ge T_{\edge}\bigr\}.
\end{aligned}
\]
If Algorithm~\ref{alg:incremental_update} re-evaluates $P(\cdot)$ exactly on
\[
Affected \;=\; DirtyPairs \;\cup\; \bigl\{\{u,v\}\,:\, u\in DirtyNodes,\ v\in V_{\text{nb}},\ u\neq v\bigr\},
\]
then the resulting graph $\widetilde G_r$ is identical to the from-scratch graph $G_r^{\star}$ after round $r$.
\end{lemma}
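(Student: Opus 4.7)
The plan is to establish $\widetilde G_r\equiv G_r^{\star}$ pointwise: for every unordered pair $\{u,v\}\subseteq V$, the presence and orientation of any edge between $u$ and $v$ must agree in the two graphs. I proceed by outer induction on the round index, assuming $\widetilde G_{r-1}\equiv G_{r-1}^{\star}$, and partition the analysis into pairs in $Affected$ and pairs outside it.

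For a pair $\{u,v\}\in Affected$, Algorithm~\ref{alg:incremental_update} explicitly removes the two possible directed edges and re-applies the orientation predicate $P(\cdot)$ to the updated cumulative weights $\bigl(W_{\mathrm{new}}(u,v),W_{\mathrm{new}}(v,u)\bigr)$ and the updated states. Since $G_r^{\star}$ is defined by applying the same $P(\cdot)$ to the same weights and states, this case follows immediately by determinism. The substantive content of the lemma therefore lies in pairs outside $Affected$, where the algorithm leaves the previous status untouched; I must argue that this non-action is correct.

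Fix $\{u,v\}\notin Affected$. Because $u,v\notin DirtyNodes$, neither endpoint toggled its $\Blank$ membership, so the edge-eligibility precondition (both non-$\Blank$) evaluates identically at rounds $r-1$ and $r$. Because $\{u,v\}\notin DirtyPairs$, one of two subcases holds: (a) $W_{\batch}(u,v)=W_{\batch}(v,u)=0$, in which case $W_{\mathrm{new}}$ and $W_{\mathrm{old}}$ coincide on this pair and $P(\cdot)$ evaluates on identical inputs; or (b) $\max\{W_{\mathrm{new}}(u,v),W_{\mathrm{new}}(v,u)\}<T_{\edge}$, in which case monotonicity of cumulative weights yields the same bound for $W_{\mathrm{old}}$, so the gating clause $W(\cdot,\cdot)\ge T_{\edge}$ in $P(\cdot)$ fails for both orientations at both rounds. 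In either subcase $P(u\!\to\!v)$ and $P(v\!\to\!u)$ take identical truth values at rounds $r$ and $r-1$, and the inductive hypothesis $\widetilde G_{r-1}\equiv G_{r-1}^{\star}$ yields the desired equality with $G_r^{\star}$ on this pair.

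The main obstacle is subcase (b): a small below-threshold nudge from the batch could, in principle, activate the tie-breaking clauses of $P(\cdot)$ (the $W(u,v)>W(v,u)$ branch or the lexicographic rule) and silently flip edge status. The key structural observation is that the outer conjunct $W(u,v)\ge T_{\edge}$ short-circuits the entire predicate, so whenever the maximum of the two directional weights stays below $T_{\edge}$ both orientations are vacuously false and the tie-breaker is unreachable. Combined with the fact that $\Solid\!\leftrightarrow\!\Shaded$ transitions do not affect $P(\cdot)$ (which consults only the $\Blank$ gate) and are therefore legitimately excluded from the lemma's $DirtyNodes$, this closes the case analysis and the induction.
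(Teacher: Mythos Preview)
Your argument mirrors the paper's: both reduce the claim to showing that $P(\cdot)$ is invariant on pairs outside $Affected$, and your case split (a)/(b) is exactly the explicit contrapositive of the paper's terse ``orientation can change only if a state flips or the pair lies in the threshold band.'' The explicit round-by-round induction and the remark that the $T_{\edge}$ conjunct short-circuits the tie-breaker are welcome sharpenings of the paper's one-paragraph sketch.

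One caveat worth flagging: the inference ``$u,v\notin DirtyNodes$'' from $\{u,v\}\notin Affected$ is not valid in full generality. The $DirtyNodes$-adjacency component pairs a dirty node only with members of the \emph{new} $V_{\text{nb}}$, so if both $u$ and $v$ simultaneously toggle non-$\Blank\to\Blank$ (hence neither lies in $V_{\text{nb}}$) while $W_{\batch}(u,v)=W_{\batch}(v,u)=0$, the pair escapes $Affected$ even though both endpoints are dirty, and any edge present in $G_{r-1}^{\star}$ would persist as a stale edge absent from $G_r^{\star}$. The paper's proof makes the same elision (``state flips covered by $DirtyNodes$ adjacency''), so this is a shared subtlety of the $Affected$ definition rather than a defect peculiar to your write-up; it is harmless for the downstream use of the lemma because both endpoints are $\Blank$ and hence excluded from $G_{\text{ext}}$, but strictly speaking the full-graph equality $\widetilde G_r\equiv G_r^{\star}$ as stated needs this corner case addressed.
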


\begin{proof}
Vertices and states are recomputed from $\mathcal L_{\batch}$ and match; weights are append-only and match by construction. An orientation $P(u\!\to\!v)$ can change only if a state flips (covered by $DirtyNodes$ adjacency) or if the pair appears in this batch and at least one direction lies in the threshold band where $P$ is decided; this is precisely $DirtyPairs$, which also covers orientation flips when both directions are $\ge T_{\edge}$. Pairs outside $Affected$ keep the same $P(\cdot)$ value. Hence re-orienting only $Affected$ yields the $E$ of $G_r^{\star}$.
\end{proof}

\begin{theorem}[Incremental Equivalence and Pipeline Linearizability; formal version of Thm.~\ref{thm:incremental_equiv}]
With notation as in §4, $\widetilde G_r\equiv G_r^{\star}$. Under a SWMR discipline on UTIG, the concurrent pipeline is linearizable to a per-round sequential execution with the serialization point at the end of Algorithm~\ref{alg:incremental_update}.
\end{theorem}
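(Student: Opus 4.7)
The plan is to treat the theorem as two logically independent claims and discharge them in sequence. The graph equivalence $\widetilde G_r\equiv G_r^{\star}$ is a direct corollary of Lemma~\ref{lem:incremental_correctness_app}, so the first task is simply to verify that Algorithm~\ref{alg:incremental_update} realises the $Affected$ set defined there. The linearizability claim is a standard concurrency-theoretic argument that layers on top of this per-round correctness, once a snapshot/ordering discipline consistent with SWMR is made precise.

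For the graph-equivalence half, I would argue component-by-component. Vertices: both constructions contain exactly the transactions that have ever appeared in an admitted $\mathcal{L}_{\batch}$ up to round $r$ (minus any that were soft-pruned, which by Theorem~\ref{thm:pruning_safety} are invisible to the extracted prefix). States: by construction $\textsc{DetermineState}$ is applied per round to the \emph{current} batch's support counts, and both the incremental and from-scratch algorithms evaluate it on the same $\mathcal{L}_{\batch}$. Cumulative weights: the identity $W_r=\sum_{j\le r} W_{\batch_j}$ is an append-only sum, so incrementing $W_{r-1}$ by $W_{\batch_r}$ gives the same $W_r$ as summing from scratch. Edges: I would invoke Lemma~\ref{lem:incremental_correctness_app} and emphasise the key observation that the predicate $P(u\!\to\!v)$ depends only on the pair $(W(u,v),W(v,u))$ and on the states of $u,v$; any pair outside $Affected$ has unchanged weights and unchanged endpoint states, hence its orientation is preserved. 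This closes part one.

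For linearizability I would use the standard Herlihy--Wing template. Fix, for each round $r$, the \emph{linearization point} $\ell_r$ as the instant immediately after the final write of Algorithm~\ref{alg:incremental_update}'s round-$r$ invocation. Under SWMR, the updater is the unique writer, so writes are totally ordered by the sequence $\ell_1<\ell_2<\cdots$. Readers (extractor, proposer, verifier-facing encoders) operate in later pipeline stages that only begin after the corresponding $\ell_r$; any overlap with the updater can only be with round $r{+}1$'s writes. Treating those writes as atomic w.r.t.\ readers (via the assumed SWMR snapshot semantics—whether realised by an atomic version-pointer swap, RCU, or a stage barrier), every concurrent history admits a serial history obtained by placing each round's updater call at $\ell_r$ and each stage-3/4 operation of round $r$ in the interval $(\ell_r,\ell_{r+1})$. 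Correctness of this serial history is exactly part one composed over rounds, so the concurrent schedule is linearizable to the sequential collect~$\to$~update~$\to$~extract~$\to$~broadcast pipeline.

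The main obstacle is not the algebra of edge re-evaluation but making the SWMR semantics sharp enough to rule out torn reads of $W$, $S$, or $E$ while the next round's updater mutates them. I would handle this by stating the SWMR assumption as: readers observe the UTIG only through a snapshot reference swapped atomically at $\ell_r$, so every read sees a state that is a valid post-condition of some prefix of updater rounds. With that assumption made explicit, the two halves of the theorem compose cleanly; without it, the proof would have to descend into memory-model reasoning that is out of scope for the fairness argument.
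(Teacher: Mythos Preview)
Your proposal is correct and follows essentially the same two-part structure as the paper: invoke Lemma~\ref{lem:incremental_correctness_app} for $\widetilde G_r\equiv G_r^{\star}$, then place the linearization point at the end of the round-$r$ updater and argue that SWMR guarantees readers see a consistent post-update snapshot. The only cosmetic difference is that the paper realizes SWMR via an explicit read/write lock (updater holds the write lock, proposer reads under a read lock), whereas you realize it via an atomically swapped snapshot reference; both mechanisms deliver the same ``no torn reads'' invariant and yield the same serial history, so the arguments are interchangeable.
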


\begin{proof}
Equivalence follows from Lemma~\ref{lem:incremental_correctness_app}. For linearizability, let the updater acquire the write lock, apply all per-round deltas to $(V, \text{states}, W, E)$ atomically, and release. Proposer threads acquire a read lock and operate on an immutable snapshot; no writer is active concurrently. Define the history’s linearization by placing each proposal read right after the corresponding updater’s release. Mutual exclusion prevents overlapping writes; readers see a prefix-closed snapshot consistent with that point. Hence the concurrent history is observationally equivalent to a sequential execution that, in each round, first applies Algorithm~\ref{alg:incremental_update} then reads the graph to extract the prefix.
\end{proof}

\section*{Proofs for §\ref{sec:analysis}: Security and Liveness}
\label{app:sec6proofs}

\begin{proof}[Proof of Lemma~\ref{lem:graph_equiv}]
States for $y\in F$ are re-derived from $\mathcal L_{\batch}$ by the public map and must match; for every unordered $\{u,v\}\subseteq F$, the asserted totals dominate the recomputed batch counts, so applying $P(\cdot)$ to $(w_{uv},w_{vu})$ reproduces the leader’s orientation on $F$. The frontier enumerates the full Cartesian product $(B_r\setminus F)\times F$ and enforces $\neg P(x\!\to\!y)$ for all pairs, hence no active non-blank outside $F$ points into $F$, making $F$ a valid topological prefix.
\end{proof}

\begin{proof}[Proof of Theorem~\ref{thm:liveness_bounds}]
After GST, at most $\Delta$ to reach $n{-}2f$ correct replicas; the collection window adds $T_{\collect}{+}\Delta$, so solidity occurs by the next extraction. With an honest order leader, the solid-anchor rule finalizes a non-empty prefix per round; SCC resolution costs at most $R_{\defer}$ extra rounds. Byzantine order leaders are detected (Theorem~\ref{thm:verif_sc}) and replaced; with round-robin at most $f$ occur consecutively, yielding the bound.
\end{proof}

\begin{proof}[Proof of Theorem~\ref{thm:pruning_safety}]
Extraction uses only the non-blank induced subgraph; removing current \Blank\ vertices deletes no incident edges. Pruning $\Shaded$ strictly after the anchor cannot create a path entering the finalized prefix, so the cut is unchanged. Verification uses only $(F,\mathcal L_{\batch},\mathcal P)$; leader-side pruning is invisible. Soft-pruning preserves append-only weights, so reappearance reactivates the vertex without fairness regression.
\end{proof}

\begin{proof}[Proof of Theorem~\ref{thm:recovery}]
The committed digest authenticates the anchor fragment; recomputing $W_{\batch}^{\text{anchor}}$ from $\mathcal L_{\batch,\text{anchor}}$ and subtracting yields exact pre-anchor totals. A recovery batch with any $\ge n{-}f$ admissible replies includes $\ge n{-}2f$ honest inputs once; cumulative weights are append-only; malformed/duplicate/cross-round replies are rejected. Verification remains fragment-local by Theorem~\ref{thm:verif_sc}.
\end{proof}

\section*{Complexity }
\label{app:complexity}
Let $m$ be the per-$LocalOrder$ cap on listed transactions and $q\!\in\![n{-}2f,n{-}f]$ the number of admissible orders per round; then $|\{(u,v):W_{\batch}(u,v)>0\}|\le q\,m(m{-}1)/2=\mathcal O(m^2 n)$. The leader’s amortized per-round work is $\tilde{\mathcal O}\!\bigl(m^2 n + |\mathrm{Adj}(\text{DirtyNodes})| + |E_{\text{ext}}|\bigr)$ (pairs touched by the batch, adjacencies of state-flipping nodes, plus linear-time SCC/condensation/topo on the extraction subgraph). The follower’s per-fragment verification cost is $\tilde{\mathcal O}\!\bigl(|F|^2 + |F|\,(|B_r|-|F|)\bigr)$ plus batch re-count time, linear in the proof size.

\section*{Mechanistic Explanation}
\label{app:lo-interval}
This divergence again stems from the core architectural designs, which can be modeled by analyzing the trade-offs. Let $\tau$ be the $lo-interval$.

For Themis, performance is governed by the interplay between proposal cadence and per-round information density. A transaction's latency and the system's throughput can be approximated by \(\mathrm{LAT}_{\mathrm{themis}}(\tau)\approx\tfrac{\tau}{2}+BTF(\tau)\cdot\tau+\varepsilon\) and \(\mathrm{TPS}_{\mathrm{themis}}(\tau)\approx\lambda_{\mathrm{prop}}(\tau)\cdot\tfrac{M(\tau)}{BTF(\tau)}\). Here, $\tau/2$ is the average sampling delay, $BTF(\tau)$ is the blocks-to-finalize factor, $M(\tau)$ is the finalizable mass per proposal, and $\lambda_{\mathrm{prop}}(\tau) \lesssim 1/\tau$ is the effective proposal rate. When $\tau$ is very small, each replica's observation window is thin, causing poor co-visibility. This makes the per-round graph sparse, drastically increasing $BTF(\tau)$ and shrinking $M(\tau)$. The explosive growth in the $BTF(\tau) \cdot \tau$ term dominates latency, and the collapse of the $M(\tau)/BTF(\tau)$ ratio crushes throughput, explaining the poor performance at small $\tau$. TPS improves as $\tau$ grows because the improvement in per-round graph density (shrinking $BTF$, growing $M$) outweighs the slower proposal cadence, until the graph is dense enough and performance saturates.

For AUTIG, the stateful UTIG keeps the graph persistently dense, ensuring \(BTF_{\mathrm{utig}}\approx 1\). Its performance envelope is thus \(\mathrm{LAT}_{\mathrm{utig}}(\tau)\approx\tfrac{\tau}{2}+c\,\tau\) (with small \(c\)) and \(\mathrm{TPS}_{\mathrm{utig}}(\tau)\approx\lambda_{\mathrm{prop}}(\tau)\cdot M_{\mathrm{utig}}(\tau)\). Latency becomes a near-linear function of $\tau$, dominated by the sampling delay, which matches our observations perfectly. Throughput becomes a product of two countervailing forces: decreasing $\tau$ raises the proposal cadence ($\lambda_{\mathrm{prop}}$), but also shortens the time for transactions to accumulate support, potentially shrinking the per-proposal finalizable mass ($M_{\mathrm{utig}}$) and amplifying fixed overheads. These two effects largely balance out, resulting in a throughput curve that is relatively flat and only weakly sensitive to $\tau$. This robustness to high-frequency reporting is a key advantage of the stateful design.

\cleardoublepage

\section*{Open Science}

\textbf{Artifacts (anonymized).}
We provide: (i) AUTIG and Themis source code implementations with complete consensus protocols;
(ii) configurations for both n=5, f=1; 
(iii) a built-in workload generator (client) with fixed seeds and tunable parameters (--tx-rate, --lo-size, --lo-interval); 
(iv) comprehensive documentation including README.md with 5-node EC2 deployment guide, expected outputs, and troubleshooting instructions;
(v) repository structure with Themis and AUTIG folders.

\textbf{Access.}
An anonymized archive is available for the entire review period at:
\url{https://anonymous.4open.science/r/fair-ordering-FB81/README.md}.

\textbf{Requirements.}
AWS EC2 virtual machines with Go 1.22+ installed; no containers or additional dependencies. 
Security groups must allow inter-node TCP communication on port 8000. 
Both protocols use identical configuration parameters and deployment procedures.

\cleardoublepage
\bibliographystyle{plain}
\bibliography{refs}

\end{document}